\documentclass{article}

\usepackage[preprint]{corl_2021}
\usepackage{times}

\usepackage{multicol}
\usepackage{booktabs}
\usepackage{multirow}
\usepackage{setspace}
\usepackage{tabularx}
\usepackage{xfrac}
\usepackage{mathtools}
\usepackage{amsthm}
\usepackage{subcaption}
\usepackage{adjustbox}
\usepackage{multicol}
\usepackage{wrapfig}

\usepackage{mymacros}

\pdfinfo{
   /Author (Charles Dawson, Zengyi Qin, Sicun Gao, Chuchu Fan)
   /Title  (Nonlinear Control Using Neural Lyapunov-Barrier Functions and Robust Convex Optimization)
   /CreationDate (D:20101201120000)
   /Subject (Nonlinear controller synthesis)
   /Keywords (Certified control, learning for control)
}

\title{Safe Nonlinear Control Using Robust Neural Lyapunov-Barrier Functions}


\author{
  Charles Dawson$^1$, Zengyi Qin$^1$, Sicun Gao$^2$, Chuchu Fan$^1$ \\
  $^1$ Massachusetts Institute of Technology, \texttt{\{cbd, qinzy, chuchu\}@mit.edu}\\
  $^2$ University of California, San Diego, \texttt{sicung@ucsd.edu}
}

\newcommand{\xg}{x_{\mathrm{goal}}}
\newcommand{\cX}{\mathcal{X}}
\newcommand{\xu}{\mathcal{X}_{\mathrm{unsafe}}}
\newcommand{\xs}{\mathcal{X}_{\mathrm{safe}}}

\mathchardef\mhyphen="2D 

\newtheorem{definition}{Definition}
\newtheorem{lemma}{Lemma}
\newtheorem{theorem}{Theorem}

\begin{document}

\maketitle
\keywords{Certified control, learning for control} 

\begin{abstract}
Safety and stability are common requirements for robotic control systems; however, designing safe, stable controllers remains difficult for nonlinear and uncertain models. We develop a model-based learning approach to synthesize robust feedback controllers with safety and stability guarantees. We take inspiration from robust convex optimization and Lyapunov theory to define robust control Lyapunov barrier functions that generalize despite model uncertainty. We demonstrate our approach in simulation on problems including car trajectory tracking, nonlinear control with obstacle avoidance, satellite rendezvous with safety constraints, and flight control with a learned ground effect model. Simulation results show that our approach yields controllers that match or exceed the capabilities of robust MPC while reducing computational costs by an order of magnitude.
\end{abstract}

\section{Introduction}
\begin{wrapfigure}{r}{0.35\linewidth}
    \centering
    \vspace{-2em}
    \includegraphics[width=45mm]{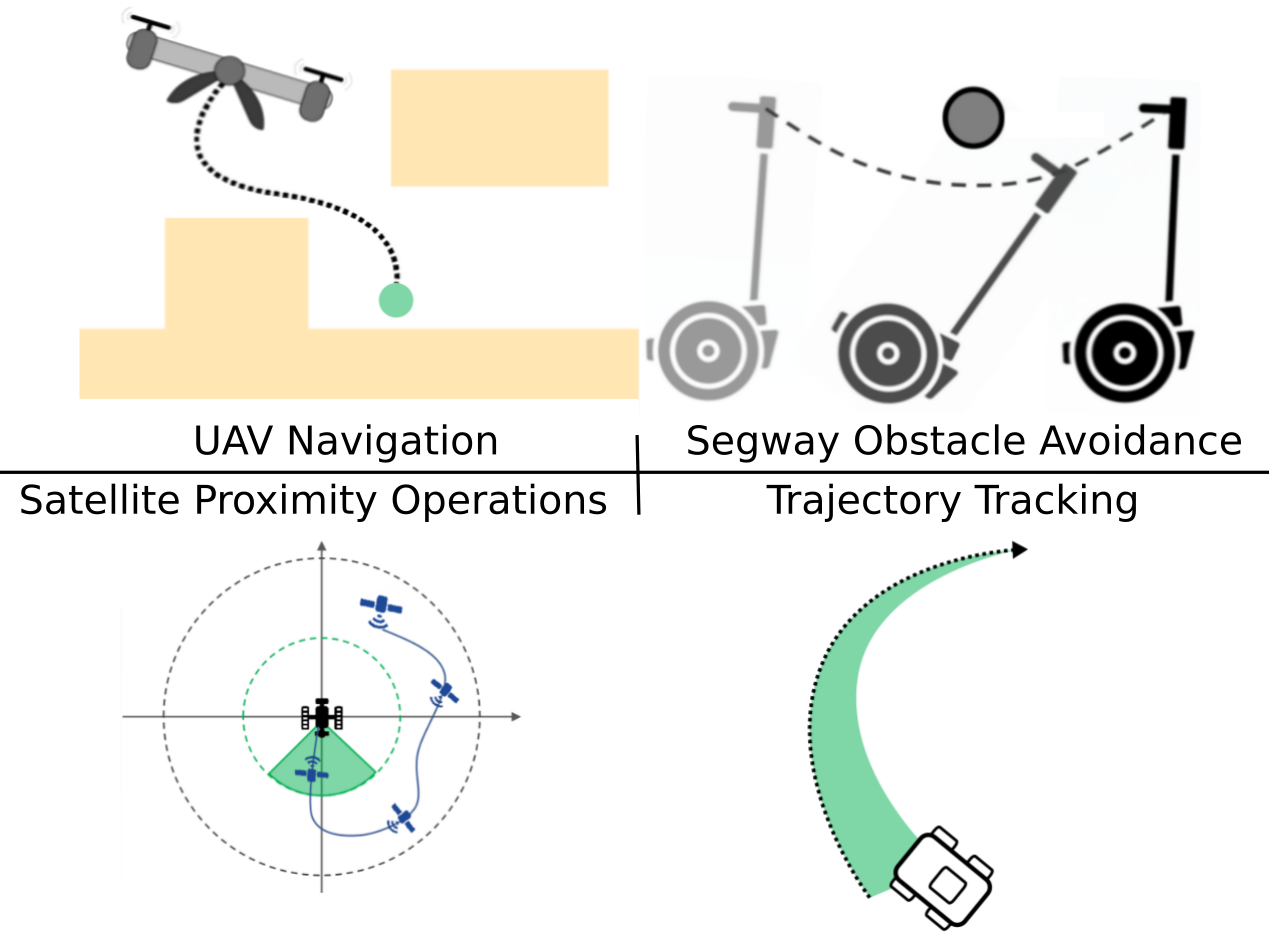}
    \caption{Safe control problems considered in Section~\ref{experiments}.}
    \label{fig:headline}
    \vspace{-1em}
\end{wrapfigure}
Robot control systems are challenging to design, not least because of the problems of \textit{task complexity} and \textit{model uncertainty}. Robotics control problems like those in Fig.~\ref{fig:headline} often involve both safety and stability requirements, where the controller must drive the system towards a goal state while avoiding unsafe regions. Complicating matters, the model used to design the controller is seldom a perfect representation of the physical plant, and so controllers must account for uncertainty in any parameters (e.g. mass, friction, or unmodeled effects) that vary between the engineering model and true plant. Automatically synthesizing safe, stable, and robust controllers for nonlinear reach-avoid tasks is a long-standing open problem in controls. In this paper, we address this problem with a novel approach to robust model-based learning. Our work presents a unified framework for handling both model uncertainty and complex safety and stability specifications.

Over the years, several approaches have been proposed to solve this problem. In one view, reach-avoid can be treated as an optimal control problem and solved using model predictive control (MPC) schemes and their robust variants. Robust MPC promises a method for general-purpose controller synthesis, finding an optimal control signal given only a model of the system and a specification of the task. However, there are a number of recognized disadvantages of robust MPC. First, there are currently no techniques for guaranteeing the safety, stability, or recursive feasibility of robust MPC beyond the linear case \cite{rmpc}. Second, many sources of model uncertainty (e.g. mass or friction) are multiplicative in the dynamics, but robust MPC algorithms are typically limited to additive uncertainty \cite{rmpc,Lofberg2003}. Finally, MPC is computationally expensive, making it difficult to achieve high control frequencies in practice \cite{Levine2019}.

An alternative method for synthesizing safe, stable controllers comes from Lyapunov theory, through the use of control Lyapunov and control barrier functions (resp., CLFs and CBFs, \cite{Ames2017a}) --- certificates that prove the stability and safety of a control system, respectively. CLFs and CBFs are similar to standard Lyapunov and barrier functions, but they can be used to synthesize a controller rather than just verifying the performance of a closed-loop system. Unfortunately, CLF and CBF certificates are very difficult construct in general, particularly for systems with nonlinear dynamics \cite{Giesl2015}.

The most recent set of methods promising general-purpose controller synthesis come from the field of learning for control; for instance, using reinforcement learning \cite{Cheng2019,Han2020} or supervised learning \cite{Chang2019,Sun2020,Qin2021,Tsukamoto2020}. However, the introduction of learning-enabled components into safety-critical control tasks raises questions about soundness, robustness, and generalization. Some learning-based control techniques incorporate certificates such as Lyapunov functions \cite{Chang2019}, barrier functions \cite{Dean2020a,Qin2021,Peruffo2020}, and contraction metrics \cite{Sun2020,Tsukamoto2020} to prove the soundness of learned controllers. Unfortunately, these certificates' guarantees are sensitive to uncertainties in the underlying model. In particular, if the model used during training differs from that encountered during deployment, then guarantees on safety and stability may no longer hold.

Our main contribution is a learning-based framework for synthesizing robust nonlinear feedback controllers from safety and stability specifications. This contribution has two parts. First, we provide a novel extension of control Lyapunov barrier functions to robust control, defining a robust control Lyapunov barrier function (robust CLBF). Second, we develop a model-based approach to learning robust CLBFs, which we use to derive a safe controller using techniques from robust convex optimization. Other methods for learning Lyapunov and barrier certificates exist, but a key advantage of our approach is that we learn certificates with explicit robustness guarantees, enabling generalization beyond the system parameters seen during training. We demonstrate our approach on a range of challenging control problems, including trajectory tracking, nonlinear control with obstacle avoidance, flight control with a learned model of ground effect, and a satellite rendezvous problem with non-convex safety constraints, comparing our approach with robust MPC. In all of these experiments, we find that our method either matches or exceeds the performance of robust MPC while reducing computational cost at runtime by at least a factor of 10.

\section{Related Work}

This work builds on a rich history of certificate-based control theory, including classical Lyapunov functions as well as more recent approaches such as control Lyapunov functions (CLFs \cite{Artstein1983,Ames2014}) and control barrier functions (CBFs \cite{Ames2019}, a generalization of artificial potential fields~\cite{Singletary2020_cbf_apf}). The majority of classical certificate-based controllers rely on hand-designed certificates \cite{Choi2020,Castaneda2020}, but these can be difficult to obtain for nonlinear or high-dimensional systems. Some automated techniques exist for synthesizing CLFs and CBFs; however, many of these techniques (such as finding a Lyapunov function as the solution of a partial differential equation) are computationally intractable for many practical applications \cite{Giesl2015}. Other automated synthesis techniques are based on convex optimization, particularly sum-of-squares programming (SOS, \cite{Ahmadi2016}), but are limited to systems with polynomial dynamics and do not scale favorably with the dimension of the system.

A promising line of work in this area is to use neural networks to learn certificate functions. These techniques range in complexity from verifying the stability of a given control system \cite{Abate2020,Richards2018} to simultaneously learning a control policy and certificate \cite{Sun2020,Chang2019,Qin2021}. Most of these works do not explicitly consider robustness to model uncertainty, although contraction metrics may be used to certify robustness to bounded additive disturbance \cite{Sun2020}.

Most approaches to handling model uncertainty in the context of certificate-guided learning for control involve online adaptation. For example, \cite{Choi2020,Taylor2019} assume that a CLF or CBF are given and learn the unmodeled residuals in the CLF and CBF derivatives. When combined with a QP-based CLF/CBF controller, this technique enables adaptation to model uncertainty but relies on a potentially unsafe exploration phase. Although safe adaptation strategies exist, the main drawback with these techniques is their reliance on a hand-designed CLF and CBF, which are non-trivial to synthesize for nonlinear systems. Additionally, combined CLF/CBF controllers are prone to getting stuck when the feasible sets of the CLF and CBF no longer intersect.

Online optimization-based control techniques such as model-predictive control (MPC) are also relevant as a general-purpose control synthesis strategy. However, the computational complexity of MPC, and particularly robust MPC, is a widely-recognized issue, particularly when considering deployment to resource-constrained robotic systems such as UAVs \cite{rmpc,Levine2019}. We revisit the computational cost of robust MPC, particularly as compared with the cost of our proposed method, in Section~\ref{experiments}. Some approaches apply learning to characterize uncertainty in system dynamics and augment a robust MPC scheme \cite{Fan2020}, but these methods do not fundamentally change the computational burden of MPC. Other methods rely on imitation learning to recreate an MPC-based policy online \cite{Kahn2016}, but these methods can encounter difficulties in generalizing beyond the training dataset.

A number of techniques from classical nonlinear control also deserve mention, such as sliding mode and adaptive controllers. These methods do not directly support state constraints and so must be paired with a separate trajectory planning layer \cite{slotine_li_1991}. Another drawback is that these techniques require significant effort to manually derive appropriate feedback control laws, and we are primarily interested in automated techniques for controller synthesis.

\section{Preliminaries and Background}\label{prelim}

We consider continuous-time, control-affine dynamical systems of the form $\dot{x} = f_\theta(x) + g_\theta(x)u$, where $x \in \cX \subseteq \R^n$, $u \in \R^\ell$, and $f_\theta: \R^n \rightarrow{\R}^n$ and $g_\theta: \R^n \rightarrow \R^{n \times \ell}$ are smooth functions modeling control-affine nonlinear dynamics. We assume that $f_\theta$ and $g_\theta$ depend on model parameters $\theta \in \Theta \subseteq \R^r$ and are affine in those parameters for any fixed $x$. This assumption on the dynamics is not restrictive; it covers many physical systems with uncertainty in inertia, damping, or friction (e.g. rigid-body dynamics or systems described by the manipulator equations), and it includes bounded additive and multiplicative disturbance as a special case. We also assume that $f_\theta$ and $g_\theta$ are Lipschitz but make no further assumptions, allowing us to consider cases when components of  $f_\theta$ and $g_\theta$ are learned from experimental data. For concision, we will use $f$ and $g$ (without subscript) to refer to the dynamics evaluated with nominal parameters $\theta_0 \in \Theta$. In this paper, we consider the following control synthesis problem:
\begin{definition}[Robust Safe Control Problem]\label{reach_avoid}
Given a control-affine system with uncertain parameters $\theta \in \Theta$, a goal configuration $\xg$, a set of unsafe states $\xu \subseteq \cX$, and a set of safe states $\xs \subseteq \cX$ (such that $\xs \cap \xu = \emptyset$ and $\xg \in \xs$), find a control policy $u  = \pi(x)$ such that all trajectories $x(t)$ satisfying $\dot{x} = f_\theta(x) + g_\theta(x)\pi(x)$ and $x(0) \in \xs$ have the following properties for any parameters $\theta$:
\begin{minipage}{1\textwidth}
    \begin{itemize}
        \setlength{\itemsep}{20pt}
        \setlength{\parskip}{20pt}
        \begin{multicols}{2}
            \item[] \textbf{Reachability} of $\xg$ with tolerance\\$\delta$: $\lim_{t\to\infty}\norm{x(t) - \xg} \leq \delta$
            \item[] \textbf{Safety}: $x(t_1) \in \xs$ implies $x(t_2) \notin \xu \ \forall \ t_2 \geq t_1$
        \end{multicols}
    \end{itemize}
\end{minipage}

\end{definition}

Simply put, we wish to \textit{reach} the goal $\xg$ while \textit{avoiding} the unsafe states $\xu$. We use the notion of reachability instead of asymptotic stability to permit (small) steady-state error; in the following we will use ``stable'' as shorthand for reachability. Note that we do not require $\xs \cup \xu = \cX$, as it will be made clear in the following discussion that we need a non-empty boundary layer $\cX \setminus (\xs \cup \xu)$ to allow for flexibility in finding a safety certificate.

Lyapunov theory provides tools that are naturally suited to reach-avoid problems: control Lyapunov functions (for stability) and control barrier functions (for safety \cite{Ames2017a}). To avoid issues arising from learning two separate certificates, we rely on a single, unifying certificate known as a control Lyapunov barrier function (CLBF). Our definition of CLBFs is related to those in \cite{Romdlony2016} and \cite{Xiao2021} (differing from the formulation in \cite{Romdlony2016} by a constant offset $c$, and differing from~\cite{Xiao2021} where safety and reachability are proven using two separate CLBFs). We begin by providing a standard definition of a CLBF in the non-robust case, but in the next section we provide a novel, robust extension of CLBF theory before demonstrating how neural networks may be used to synthesize these functions for a general class of dynamical system. In the following, we denote $L_f V$ as the Lie derivative of $V$ along $f$.

\begin{definition}[CLBF]
A function $V: \cX \to \R$ is a CLBF if, for some $c,\lambda > 0$, 
\begin{subequations}
    \vspace{-2em}
    \begin{multicols}{2}
    \begin{align}
        & V(\xg) = 0 \label{eq:clbf_goal_cond}\\
        & V(x) > 0\ \forall\ x \in \cX \setminus \xg \label{eq:clbf_pos_cond}
    \end{align}
    \begin{align}
        &\nonumber\\
        & V(x) \leq c\ \forall\ x \in \xs \label{eq:clbf_safe_cond} \\
        & V(x) > c\ \forall\ x \in \xu \label{eq:clbf_unsafe_cond}
    \end{align}
    \end{multicols}
    \vspace{-2.5em}
    \begin{align}
        \inf_u L_f V + L_g V u + \lambda V(x) \leq 0 \ \forall \ x \in \cX \setminus \xg \label{eq:clbf_decrease_cond}
    \end{align}
\end{subequations}
\end{definition}
\vspace{-1em}
Intuitively, we can think of a CLBF as a special case of a control Lyapunov function where the safe and unsafe regions are contained in sub- and super-level sets, respectively. If we define a set of admissible controls $K(x) = \set{u\ |\ L_f V + L_g V u + \lambda V \leq 0}$, then we arrive at a theorem proving the stability and safety of any controller that outputs elements of this set (the proof is included in the supplementary material).
\begin{theorem}\label{clbf_both}
    If $V(x)$ is a CLBF then any control policy $\pi(x) \in K(x) \ \forall\ x \in \cX$ will be both safe and stable, in the sense of Definition~\ref{reach_avoid}.
\end{theorem}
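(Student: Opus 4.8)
The plan is to read the admissible-control condition as a single differential inequality for $V$ along closed-loop trajectories, and then extract safety and reachability as two separate consequences of that one inequality. First I would fix any trajectory $x(t)$ of the nominal closed-loop system $\dot{x} = f(x) + g(x)\pi(x)$ with $\pi(x) \in K(x)$, and differentiate $V$ along it: by the chain rule $\tfrac{d}{dt}V(x(t)) = \nabla V(x(t)) \cdot \dot{x} = L_f V + L_g V\,\pi(x)$. Since $\pi(x) \in K(x)$, the defining inequality of $K$ gives immediately $\dot{V} \leq -\lambda V(x(t))$ for every $t$ with $x(t) \neq \xg$.

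For safety, I would use that $V$ is positive away from $\xg$ by~\eqref{eq:clbf_pos_cond} and $\lambda > 0$, so the right-hand side is nonpositive and $V$ is nonincreasing along every trajectory. Consequently the sublevel set $\{x : V(x) \leq c\}$ is forward invariant. If $x(t_1) \in \xs$ then $V(x(t_1)) \leq c$ by~\eqref{eq:clbf_safe_cond}, hence $V(x(t_2)) \leq c$ for all $t_2 \geq t_1$; but~\eqref{eq:clbf_unsafe_cond} forces $V > c$ on $\xu$, so $x(t_2) \notin \xu$. This is exactly the safety property of Definition~\ref{reach_avoid}.

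For reachability, I would apply the comparison lemma (equivalently Gr\"onwall's inequality) to $\dot{V} \leq -\lambda V$ to obtain the explicit bound $V(x(t)) \leq V(x(0))\,e^{-\lambda t}$, so $V(x(t)) \to 0$ as $t \to \infty$. The remaining step converts convergence of $V$ into convergence of the state: because $V$ is continuous and, by~\eqref{eq:clbf_goal_cond}--\eqref{eq:clbf_pos_cond}, positive definite about $\xg$, the sublevel sets $\{V \leq \gamma\}$ shrink toward $\xg$ as $\gamma \downarrow 0$, yielding $\norm{x(t) - \xg} \to 0$ and hence reachability with any tolerance $\delta \geq 0$.

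The main obstacle I anticipate lies in the reachability direction rather than in safety. Two technical points need care. First, the differential inequality holds only while $x(t)$ stays in $\cX \setminus \xg$, so I must rule out escape from $\cX$ before convergence; the forward invariance of $\{V \leq V(x(0))\}$ already established for safety handles this, provided that sublevel set is contained in $\cX$. Second, passing from $V \to 0$ to $x \to \xg$ requires a class-$\mathcal{K}$ lower bound on $V$ in terms of $\norm{x - \xg}$, which is automatic under continuity together with compactness of the relevant sublevel set (or radial unboundedness of $V$); this is precisely the place where the tolerance $\delta$ provides slack, since a guaranteed decrease only to a small neighborhood of $\xg$ still suffices for reachability at that tolerance.
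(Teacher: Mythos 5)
Your proposal is correct and takes essentially the same route as the paper's proof: both reduce everything to the single closed-loop inequality $\dot{V} \leq -\lambda V$ obtained from membership in $K(x)$, deriving safety from invariance of the sublevel set $\{x : V(x) \leq c\}$ together with conditions~\eqref{eq:clbf_safe_cond}--\eqref{eq:clbf_unsafe_cond} (the paper phrases this as a contradiction argument), and reachability from exponential decay of $V$ toward zero. If anything, your write-up is more careful than the paper's, which asserts exponential stability in a single line and omits the Gr\"onwall step and the class-$\mathcal{K}$/compactness argument you correctly identify as needed to pass from $V(x(t)) \to 0$ to $\norm{x(t) - \xg} \to 0$.
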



Based on these results, we can define a CLBF-based controller, analogous to the CLF/CBF-based controller in \cite{Choi2020} but without the risk of conflicts between the CLF and CBF conditions, relying on the CLBF $V$ and some nominal controller $\pi_{\mathrm{nominal}}$ (e.g. the LQR policy):
\begin{align}
    \pi_{\rm{CLBF}}(x) = \argmin_{u} \quad & \frac{1}{2} \norm{u - \pi_{\mathrm{nominal}}(x)}^2 \tag{CLBF-QP}\label{eq:clbf_qp}\\
    \text{s.t.} \quad & L_f V + L_g V u + \lambda V \leq 0
\end{align}
It should be clear that $\pi_{\textrm{CLBF}}(x) \in K(x) \ \forall\ x \in \cX \setminus \xg$, so this controller will result in a system that is certifiably safe and stable (with the CLBF $V$ acting as the certificate). The nominal control signal $\pi_{\mathrm{nominal}}$ is included to encourage smoothness in the solution $\pi_{\rm{CLBF}}(x)$, particularly near the desired fixed point at $\xg$ where $\dot{V}$ becomes small. CLBFs provide a single, unified certificate of safety and stability; however, some significant issues remain. In particular, how do we guarantee that a CLBF will generalize beyond the nominal parameters?

\section{Robust CLBF Certificates for Safe Control}

In this section, we extend the definition of CLBFs to provide explicit robustness guarantees, and we present a key theorem proving the soundness of robust CLBF-based control.


\begin{definition}[Robust CLBF, rCLBF]\label{rclbf_def}
A function $V: \cX \to \R$ is a robust CLBF for bounded parametric uncertainty $\theta \in \Theta$, where $\Theta$ is the convex hull of scenarios $\theta_1, \theta_2, \ldots, \theta_{n_s}$ if the standard CLBF conditions \eqref{eq:clbf_goal_cond}--\eqref{eq:clbf_unsafe_cond} hold, the dynamics $f$ and $g$ are affine with respect to $\theta$, and $\forall \ x \in \cX \setminus \xg$ there exist $c,\lambda > 0$ such that
    \begin{align}
        & \inf_u L_{f_{\theta_i}} V + L_{g_{\theta_i}} V u + \lambda V(x) \leq 0 \qquad \forall i = 1,\ldots,n_s \label{eq:rclbf_decrease_cond}
    \end{align}
\end{definition}
As in the non-robust case, we define the set of admissible controls for a robust CLBF, $K_r(x) = \set{u\ |\ L_{f_{\theta_i}} V + L_{g_{\theta_i}} V u + \lambda V \leq 0 \ \forall\ i=0,\ldots,n_s}$, and the corresponding QP-based controller, the soundness of which is given by Theorem~\ref{rclbf}:
\begin{align}
    \pi_{\mathrm{rCLBF}} = &\argmin_{u}\ \norm{u - \pi_{\mathrm{nominal}}}^2 \tag{rCLBF-QP} \label{eq:rclbf_qp}\\
    \text{s.t.}&\ L_{f_{\theta_i}} V + L_{g_{\theta_i}} V u + \lambda V \leq 0;\ i=0, \ldots, n_s \label{eq:rclbf_qp_cons}
\end{align}
\begin{theorem}\label{rclbf}
    If $V(x)$ is a robust CLBF, then any control policy $\pi(x) \in K_r(x) \ \forall\ x \in \cX$ will be both safe and stable, in the sense of Definition~\ref{reach_avoid}, when executed on a system $f_\theta$, $g_\theta$ with uncertain parameters $\theta \in \Theta$ (where $\Theta$ is the convex hull of scenarios $\theta_0,\ldots,\theta_{n_s}$).
\end{theorem}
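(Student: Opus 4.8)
The plan is to reduce the robust claim to the non-robust Theorem~\ref{clbf_both} by exploiting the affine structure of the dynamics in $\theta$. The key observation is that since $f_\theta$ and $g_\theta$ are affine in $\theta$, and $\Theta$ is the convex hull of the scenarios $\theta_0,\ldots,\theta_{n_s}$, the Lie derivatives $L_{f_\theta}V$ and $L_{g_\theta}V$ are also affine in $\theta$. Therefore any $\theta \in \Theta$ can be written as a convex combination $\theta = \sum_i \alpha_i \theta_i$ with $\alpha_i \geq 0$ and $\sum_i \alpha_i = 1$, and the quantity $L_{f_\theta}V + L_{g_\theta}V u + \lambda V(x)$ equals the same convex combination $\sum_i \alpha_i \bigl(L_{f_{\theta_i}}V + L_{g_{\theta_i}}V u + \lambda V(x)\bigr)$ of the per-scenario expressions.

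First I would fix an arbitrary policy $\pi(x) \in K_r(x)$ and an arbitrary realized parameter $\theta \in \Theta$. By definition of $K_r$, the decrease condition $L_{f_{\theta_i}}V + L_{g_{\theta_i}}V\,\pi(x) + \lambda V(x) \leq 0$ holds simultaneously for every vertex scenario $i = 0,\ldots,n_s$. Taking the convex combination with the weights $\alpha_i$ that express $\theta$, and using the affineness identity above, I would conclude that $L_{f_\theta}V + L_{g_\theta}V\,\pi(x) + \lambda V(x) \leq 0$, since a convex combination of nonpositive numbers is nonpositive. This shows that $\pi(x)$ satisfies the \emph{non-robust} decrease condition \eqref{eq:clbf_decrease_cond} for the specific realized system $f_\theta,g_\theta$, i.e. $\pi(x) \in K(x)$ for that system.

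Next I would invoke Theorem~\ref{clbf_both}: since $V$ satisfies the standard CLBF structural conditions \eqref{eq:clbf_goal_cond}--\eqref{eq:clbf_unsafe_cond} (these are shared between the CLBF and rCLBF definitions and do not depend on $\theta$), and $\pi(x)$ satisfies the decrease condition for the realized dynamics, $V$ is a valid CLBF for the system $f_\theta,g_\theta$ and $\pi$ lies in its admissible set. Theorem~\ref{clbf_both} then guarantees that trajectories of the closed-loop system $\dot{x} = f_\theta(x) + g_\theta(x)\pi(x)$ starting in $\xs$ are both safe and reach $\xg$ with tolerance $\delta$, in the sense of Definition~\ref{reach_avoid}. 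Since $\theta \in \Theta$ was arbitrary, this holds uniformly for all admissible parameters, which is exactly the robust conclusion.

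The main obstacle, and the step requiring the most care, is the affineness argument: I must verify that the Lie derivatives inherit affineness in $\theta$ from $f_\theta$ and $g_\theta$. This follows because $L_{f_\theta}V = \nabla V \cdot f_\theta(x)$ and $L_{g_\theta}V = \nabla V \cdot g_\theta(x)$ are linear in $f_\theta$ and $g_\theta$ respectively for fixed $x$, and the composition of a linear map with an affine function is affine; the $\lambda V(x)$ term is constant in $\theta$. A subtle point worth flagging is the indexing discrepancy between the decrease conditions in Definition~\ref{rclbf_def} (which range over $i=1,\ldots,n_s$) and the set $K_r$ (which ranges over $i=0,\ldots,n_s$); I would treat the full vertex set $\theta_0,\ldots,\theta_{n_s}$ as the generators of $\Theta$ so that the convex-combination argument covers every $\theta \in \Theta$, including the nominal $\theta_0$.
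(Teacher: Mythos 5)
Your proposal is correct and follows essentially the same route as the paper's own proof: both use the affineness of the constraint's left-hand side in $\theta$ to conclude that satisfaction at the vertex scenarios implies satisfaction at every $\theta$ in the convex hull, and then reduce to Theorem~\ref{clbf_both} for the realized system. Your version merely spells out the convex-combination computation explicitly (and helpfully flags the $i=0$ vs.\ $i=1$ indexing discrepancy), which the paper states more compactly.
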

\begin{proof}
    See the supplementary materials.
\end{proof}
\vspace{-1em}
This result demonstrates the soundness and robustness of an rCLBF-based controller, but does not provide a means to construct a valid rCLBF. In the next section, we will present an automated model-based learning approach to rCLBF synthesis, yielding a general framework for solving robust safe control problems even for systems with complex, nonlinear, or partially-learned dynamics.

\section{Learning Robust CLBFs}\label{learning_approach}

A persistent challenge in using of certificate-based controllers is the difficulty of finding valid certificates, especially for systems with nonlinear dynamics and complex specifications of $\xs$ and $\xu$ (e.g. obstacle avoidance). Taking inspiration from recent advances in certificate-guided learning for control \cite{Chang2019, Qin2021}, we employ a model-based supervised learning framework to synthesize an rCLBF-based controller. 
The controller architecture is comprised of three main parts: the rCLBF $V$, a proof controller $\pi_{\mathrm{NN}}$, and the QP-based controller \eqref{eq:rclbf_qp}. We parameterize $V: \cX \to \R$ and $\pi_{\mathrm{NN}}: \cX \to \R^\ell$ as neural networks. These networks are trained offline, where $\pi_{\rm{NN}}$ is used to prove that the feasible set of \eqref{eq:rclbf_qp} is non-empty, then $V$ is evaluated online to provide the parameters of \eqref{eq:rclbf_qp}, which is solved to find the control input.
%
%
In the offline training stage, our primary goal is finding an rCLBF $V(x)$ such that the conditions of Definition~\ref{rclbf_def} are satisfied. To ensure \eqref{eq:clbf_pos_cond}, we define $V(x) = w^T(x)w(x) \geq 0$, where $w$ is the activation vector of the last hidden layer of the $V$ neural network. To train $V$ such that conditions \eqref{eq:clbf_goal_cond}, \eqref{eq:clbf_safe_cond}, \eqref{eq:clbf_unsafe_cond}, and \eqref{eq:rclbf_decrease_cond} are satisfied over the domain of interest, we sample $N_{\mathrm{train}}$ points uniformly at random from $\cX$ to yield a population of training points $x$, then define the empirical loss:
\begin{align}
    \mathcal{L}_{\mathrm{rCLBF}} &= V(\xg)^2 + a_1 \frac{1}{N_{\mathrm{safe}}} \sum_{x \in \xs} \left[\epsilon + V(x) - c \right]_+ + a_2 \frac{1}{N_{\mathrm{unsafe}}} \sum_{x \in \xu} \left[\epsilon + c - V(x)\right]_+ \nonumber\\
    &\quad\quad + \frac{a_3}{n_s N_{train}} \sum_{x} r(x) \sum_{i=0}^{n_s} [\epsilon + L_{f_{\theta_i}} V(x) + L_{g_{\theta_i}} V(x) \pi_{\mathrm{NN}}(x) + \lambda V(x)]_+ 
\end{align}
where $a_1$--$a_3$ are positive tuning parameters, $\epsilon > 0$ is a small parameter (typically $0.01$) that allows us to encourage strict inequality satisfaction and enables generalization claims, $N_{\mathrm{safe}}$ and $N_{\mathrm{unsafe}}$ are the number of points in the training sample in $\xs$ and $\xu$, respectively, and $[\circ]_+ = \max(\circ, 0)$ is the ReLU function. The terms in this empirical loss are directly linked to conditions \eqref{eq:clbf_goal_cond}, \eqref{eq:clbf_safe_cond}, \eqref{eq:clbf_unsafe_cond}, and \eqref{eq:rclbf_decrease_cond} such that each term is zero if the corresponding condition is satisfied at all $N_{\mathrm{train}}$ training points. For example, the final term in this loss is designed to encourage satisfaction of the robust CLBF decrease condition \eqref{eq:rclbf_decrease_cond}. The factor $r(x)$ in the final term is computed by solving~\eqref{eq:rclbf_qp} at each training point and computing the maximum violation of constraint~\eqref{eq:rclbf_qp_cons}, such that $r(x) = 0$ when the QP has a feasible solution and $r(x) > 0$ otherwise. This loss is optimized using stochastic gradient descent, alternating epochs between training the $V$ and $\pi_{\mathrm{NN}}$ networks. During training, we rely on $\pi_{\mathrm{NN}}$ to compute the time derivative of $V(x)$ in the final term of the loss. To provide a training signal for $\pi_{\mathrm{NN}}$, we define an additional loss $\mathcal{L}_\pi = \norm{\pi_{\mathrm{NN}} - \pi_{\mathrm{nominal}}}^2$, where $\pi_{\mathrm{nominal}}$ is a nominal controller (e.g. a policy derived from an LQR approximation). The parameters of $V$ and $\pi_{\mathrm{NN}}$ are optimized using the combined loss $\mathcal{L} = \mathcal{L}_{\mathrm{rCLBF}} + (10^{-5})\mathcal{L}_\pi$. The small weight applied to $\mathcal{L}_\pi$ ensures that the training process prioritizes satisfying the CLBF conditions.

An important detail of our control architecture is that the learned control policy $\pi_{\mathrm{NN}}$ is used primarily to demonstrate that the feasible set of \eqref{eq:rclbf_qp} is non-empty. We are not required to use $\pi_{\mathrm{NN}}$ at execution time; we can choose any control policy from the admissible set $K_{r}(x)$. In the online stage, we rely on an optimization-based controller \eqref{eq:rclbf_qp}, which solves a small quadratic program with $n_s$ constraints and $\ell$ variables (one for each element of $u$). To ensure that this QP is feasible at execution, we permit a relaxation of the CLBF constraints \eqref{eq:rclbf_qp_cons} and penalize relaxation with a large coefficient in the objective. Once trained, $V$ can be verified using neural-network verification tools \cite{Liu2021}, sampling \cite{Bobiti2018}, or a generalization error bound \cite{Qin2021}. More details on data collection, training, implementation, and verification strategies are included in the supplementary materials.

It is important to note that this training strategy encourages satisfying \eqref{eq:rclbf_decrease_cond} only on the finite set of training points sampled uniformly from the state space; there is no learning mechanism that enforces dense satisfaction of \eqref{eq:rclbf_decrease_cond}. In the supplementary materials, we include plots of 2D sections of the state space showing that \eqref{eq:rclbf_decrease_cond} is satisfied at the majority of points, but there is a relatively small violation on a sparse subset of the state space. Because these violation regions are sparse, the theory of \textit{almost Lyapunov functions} applies \cite{liu2020almost}: small violation regions may induce temporary overshoots (requiring shrinking the certified invariant set), but they do not invalidate the safety and stability assurances of the certificate. Strong empirical results on controller performance in Section~\ref{experiments} support this conclusion, though we admit that good empirical performance is not a substitute for guarantees based on rigorous verification, which we hope to revisit in future work.


\section{Experiments}\label{experiments}

To evaluate the performance of our learned rCLBF-QP controller, we compare against min-max robust model predictive control (as described in \cite{Lofberg2003,lofberg2012}) on a series of simulated benchmark problems representing safe control problems with increasing complexity. The first two concern trajectory tracking, where we wish to limit the tracking error despite uncertainty in the reference trajectory. The next two benchmarks are UAV stabilization problems that add additional safety constraints and increasingly nonlinear dynamics. The last three benchmarks involve highly non-convex safety constraints. The first four benchmarks provide a solid basis for comparison between our proposed method and robust MPC, while the last three demonstrate the power of our approach to generalize to maintain safety even in complex environments.

In each experiment, we vary model parameters randomly in $\Theta$, simulate the performance of the controller, and compute the rate of safety constraints violations and average error relative to the goal $\norm{x-\xg}$ across simulations. These data are reported along with average evaluation time for each controller in Table~\ref{tab:safety_results}. To examine the effect of control frequency on MPC performance, we include results for two different control periods $dt$ for all robust MPC experiments (we also report the horizon length $N$). In some cases we observed that the evaluation time for MPC exceeds the control period; in practice this would lead to the controller failing, but in our experiments we simply ran the simulation slower than real-time. Our robust MPC comparison supports only linear models with bounded additive disturbance; we linearize the systems about the goal point and select an additive disturbance to approximate the disturbance from uncertain model parameters. The following sections will present results from each benchmark separately, and more details are provided in the supplementary materials, including the dynamics and constraints used for each benchmark, as well as the hardware used for training and execution.

\vspace{-1em}
\begin{table}[h]
\caption{Comparison of controller performance under parameter variation}\label{tab:safety_results}
\begin{center}
\begin{spacing}{1.1}
\scalebox{0.8}{
\begin{tabular}{c| c| c| c| c}
\hline
Task & Algorithm & Safety rate & $\norm{x-\xg}$ & Evaluation time (ms) \\
\hline \hline
{Car trajectory tracking}$^1$  & rCLBF-QP & \multicolumn{2}{c|}{\textbf{0.7523}} & \textbf{10.4} \\
Kinematic model & Robust MPC ($dt=\SI{0.1}{s}$, $N=6$) &  \multicolumn{2}{c|}{1.5148} & 194.6 \\
$(n=5,\ell=2,n_s=2)$ & Robust MPC ($dt=\SI{0.25}{s}$, $N=6$) &  \multicolumn{2}{c|}{12.4438} & 172.8 \\
\hline
{Car trajectory tracking}$^1$  & rCLBF-QP & \multicolumn{2}{c|}{1.0340} & \textbf{9.6} \\
Sideslip model & Robust MPC ($dt=\SI{0.1}{s}$, $N=5$) &  \multicolumn{2}{c|}{\textbf{0.1560}} & 336.5 \\
$(n=7,\ell=2,n_s=2)$ & Robust MPC ($dt=\SI{0.25}{s}$, $N=5$) &  \multicolumn{2}{c|}{18.1939} & 316.9 \\
\hline
{3D Quadrotor} & rCLBF-QP & 100\% & 0.4647 & \textbf{9.7} \\
$(n=9,\ell=4,n_s=2)$ & Robust MPC ($dt=\SI{0.10}{s}$, $N=5$) &  100\% & \textbf{0.0980} & 316.2\\
& Robust MPC ($dt=\SI{0.25}{s}$, $N=5$) &  100\% & 63.6303 & 291.0 \\
\hline
{Neural Lander} & rCLBF-QP & 100\% & \textbf{0.1332} & \textbf{13.1} \\
$(n=6,\ell=3,n_s=1)$ & Robust MPC ($dt=\SI{0.10}{s}$, $N=5$) &  100\% & 0.2086 & 247.2 \\
& Robust MPC ($dt=\SI{0.25}{s}$, $N=5$) &  100\% & 0.3267 & 253.2 \\
\hline
{Segway}  & rCLBF-QP & \textbf{100\%} & \textbf{0.0447} & \textbf{4.4} \\
$(n=4,\ell=1,n_s=4)$ & Robust MPC ($dt = \SI{0.10}{s}$, $N=5$) &  21\% & 1.3977 & 214.8 \\
& Robust MPC ($dt = \SI{0.25}{s}$, $N=5$) &  11\% & 1.9725 & 239.1 \\
\hline
2D Quadrotor$^2$  & rCLBF-QP & \multicolumn{2}{c|}{\textbf{83\%}} & \textbf{18.6} \\
$(n=6,\ell=2,n_s=4)$ & Robust MPC ($dt = \SI{0.10}{s}$, $N=5$) &  \multicolumn{2}{c|}{53\%} & 276.9 \\
& Robust MPC ($dt = \SI{0.25}{s}$, $N=5$) &  \multicolumn{2}{c|}{0\%} & 265.2 \\
\hline
{Satellite Rendezvous} & rCLBF-QP & \textbf{100\%} & \textbf{0.1369} & \textbf{8.2} \\
$(n=4,\ell=2)$ & Robust MPC ($dt = \SI{0.10}{s}$, $N=5$) &  39\% & 6.3751 & 187.3 \\
& Robust MPC ($dt = \SI{0.25}{s}$, $N=5$) &  15\% & 9.0592 & 197.4 \\
\hline
\multicolumn{5}{l}{\small $^1$ For car trajectory tracking, we compute maximum tracking error over the trajectory.}\\
\multicolumn{5}{l}{\small $^2$ For 2D quadrotor, we compute \% of trials reaching the goal with tolerance $\delta=0.3$ without collision.} \\
\multicolumn{5}{l}{\small Note: We also implemented SOS optimization to search for a CLBF and controller, but bilinear optimization (as in \cite{Majumdar2013})} \\
\multicolumn{5}{l}{\small did not converge with maximum polynomial degree 10 and a Taylor expansion of the nonlinear dynamics.}
\end{tabular}
}
\end{spacing}
\end{center}
\end{table}
\vspace{-3em}

\subsection{Car trajectory tracking}
\label{sec:tracking}
First, we consider the problem of tracking an \textit{a priori} unknown trajectory using two different car models. In the first model (the kinematic model), the vehicle state is $[x_e, y_e, \delta, v_e, \psi_e]$, representing error relative to the reference trajectory ($\delta$ is the steering angle). The second model (the sideslip model) has state $[x_e, y_e, \delta, v_e, \psi_e, \dot{\psi}_e, \beta]$, where $\beta$ is the sideslip angle \cite{Althoff2017a}. Both models have control inputs for the rate of change of $\delta$ and $v_e$. We assume that the reference trajectory is parameterized by an uncertain curvature: at any point the angular velocity of the reference point can vary on $[-1.5, 1.5]$. The goal point is zero error relative to the reference, and the safety constraint requires maintaining bounded tracking error.

The performance of our controller is shown in Fig.~\ref{fig:car_tracking}. We see that for both models, both our controller and robust MPC are able to track the reference trajectory. However, robust MPC was only successful when run at slower than real-time speeds (with a control period $dt=\SI{0.1}{s}$ roughly twice as fast as the average evaluation time). MPC became unstable when run at a slower control frequency $dt=\SI{0.25}{s}$. In contrast, our rCLBF-QP controller runs in real-time with a control period of $\approx\SI{10}{ms}$ on a laptop computer. This significant improvement in speed is due primarily to the reduction in the size of \eqref{eq:rclbf_qp} relative to that of the QPs used by robust MPC. For example, for the sideslip model, our controller solves a QP with 2 variables and 2 constraints, whereas the robust MPC controller solves a QP with 35 variables and 23 constraints (after pre-compiling using YALMIP \cite{lofberg2012}). Because the learned rCLBF encodes long-term safety and stability constraints into local constraints on the rCLBF derivative, the rCLBF controller requires only a single-step horizon (as opposed to the receding horizon used by MPC).

By comparing performance between these two models, we can discern an important feature of our approach. Increasing the state dimension when moving between models does not substantially increase the evaluation time for our controller (as it does for robust MPC), but it does degrade the tracking performance, suggesting that the number of samples required to train the CLBF to any given level of performance increases with the size of the state space. These examples also highlight a potential drawback of our approach, which relies on a \textit{parameter-invariant} robust CLBF. Because it attempts to find a common rCLBF for all possible parameter values, our controller exhibits some small steady-state error near the goal. This occurs because there is no single control input that renders the goal a fixed point for all possible parameter values and motivates our use of a goal-reaching tolerance in Definition~\ref{reach_avoid}.

\begin{figure}[bh!]
     \hspace*{-1.8cm}
     \begin{minipage}[c][3cm][t]{1.2\textwidth}
       \begin{subfigure}[b]{\textwidth}
         \centering
         \includegraphics[height=2.5cm]{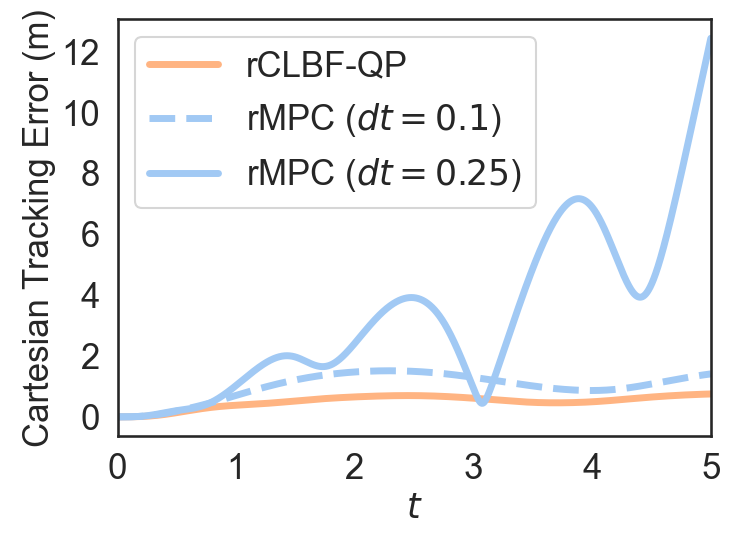}
         \includegraphics[height=2.5cm]{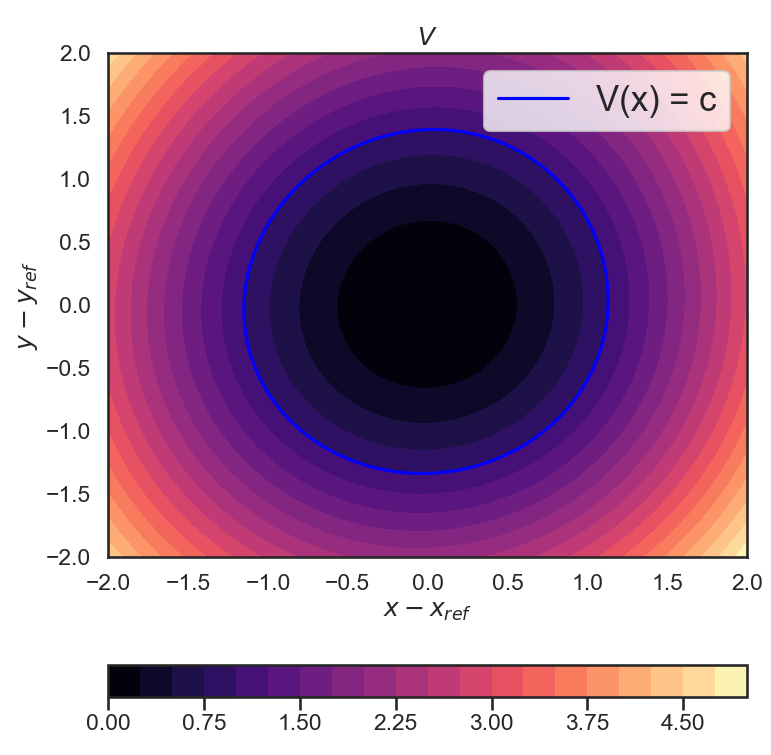}
         \includegraphics[height=2.5cm]{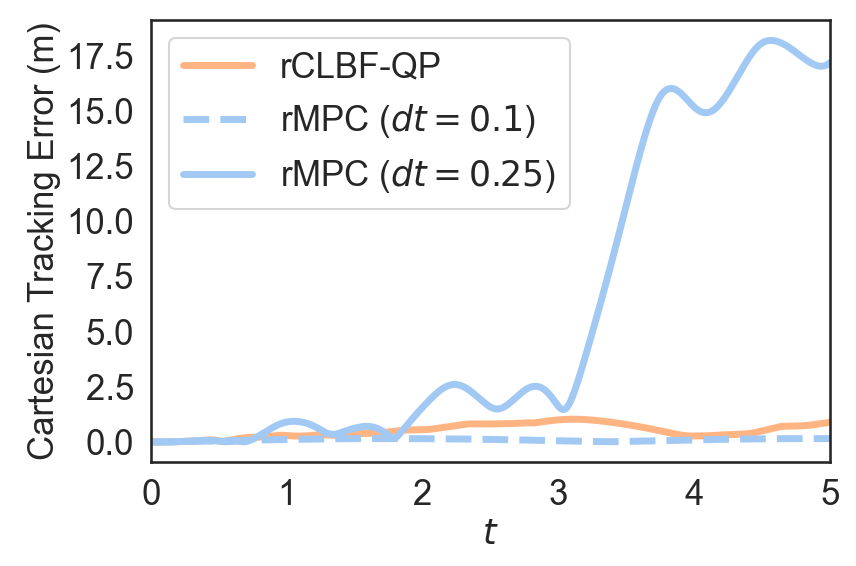}
         \includegraphics[height=2.5cm]{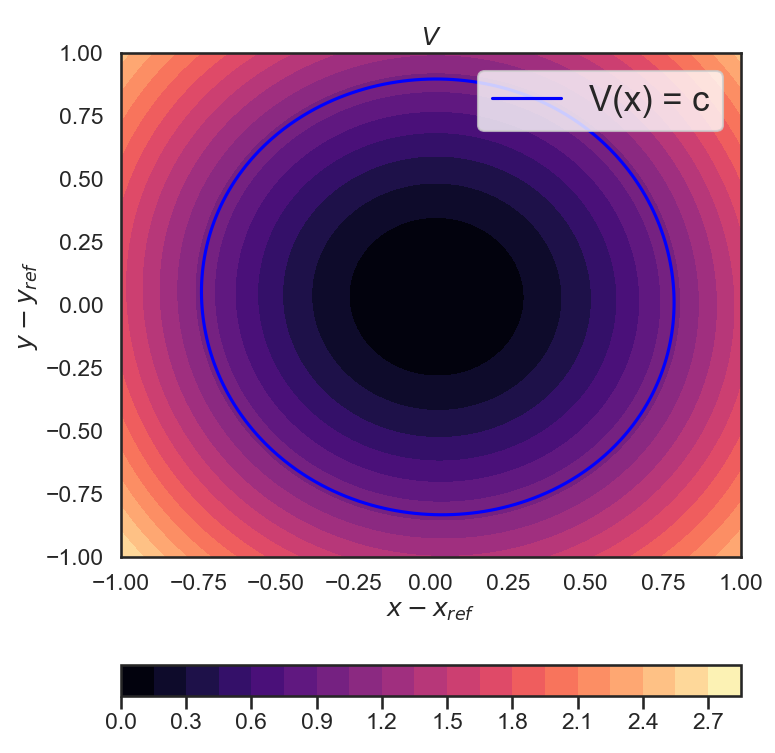}
     \end{subfigure}
     \end{minipage}
         
     
    \caption{Trajectory tracking on kinematic (left) and sideslip (right) vehicle models, with contour plots of $V$. Blue shows the $c$-level set.}
    \label{fig:car_tracking}
\end{figure}
\vspace{-1.5em}
\subsection{UAV stabilization}

The next two examples involve stabilizing a quadrotor near the ground while maintaining a minimum altitude. Relative to the previous examples, these benchmarks increase the complexity of the state constraints, and we consider two models with increasingly challenging dynamics. The first model (referred to as the ``3D quadrotor'') has 9 state dimensions for position, velocity, and orientation, with control inputs for the net thrust and angular velocities \cite{Sun2020}. The second model (the ``neural lander'') has lower state dimension, including only translation and velocity, with linear acceleration as an input, but its dynamics include a neural network trained to approximate the aerodynamic ground effect, which is particularly relevant to this safe hovering task \cite{liu2020robust}. The mass of both models is uncertain, but assumed to lie on $[1.0, 1.5]$ for the 3D quadrotor and $[1.47, 2.0]$ for the neural lander.

Fig.~\ref{fig:uav_stabilization} shows simulation results on these two models. The trend from the previous benchmarks continues: our controller maintains safety while reducing evaluation time by a factor of 10 relative to MPC. Moreover, while the robust MPC method can achieve low error relative to the goal for the the 3D quadrotor model, the nonlinear ground effect term prevents MPC from driving the neural lander to the goal. In contrast, the rCLBF-QP method can consider the full nonlinear dynamics of the system, including the learned ground effect, and achieves a much lower error relative to the goal.

\begin{figure}[bh!]
     \centering
     \hspace{-2cm}
     \begin{subfigure}[t]{1.2\linewidth}
         \includegraphics[height=2.9cm]{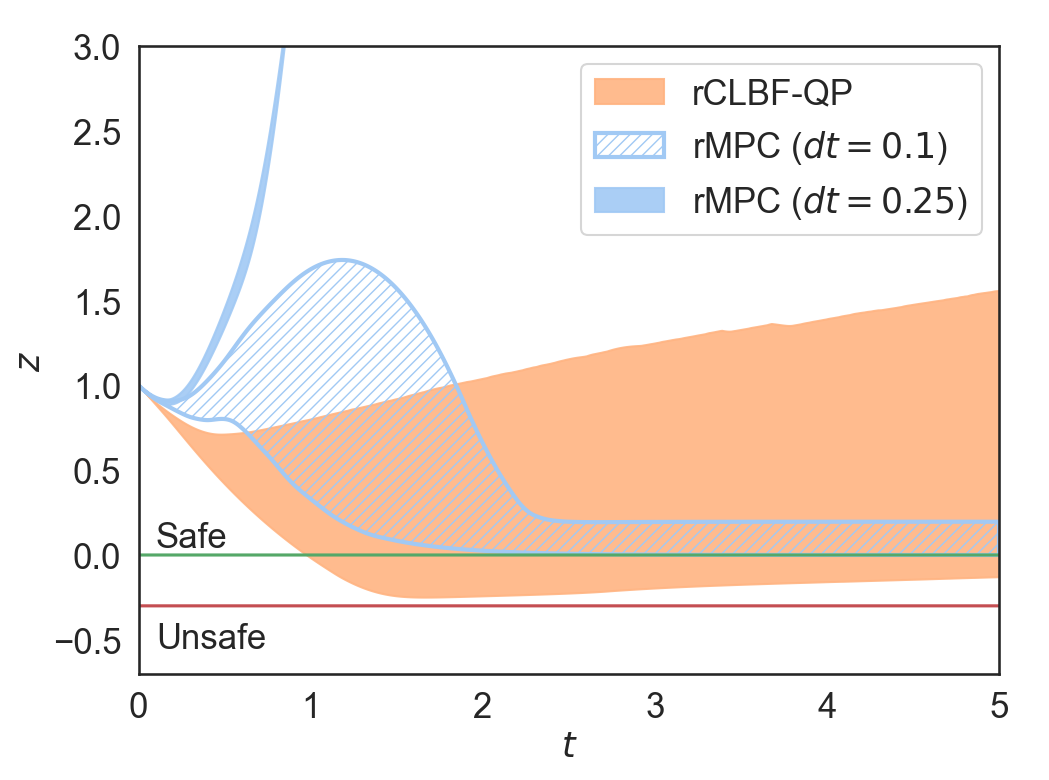}
         \includegraphics[height=2.9cm]{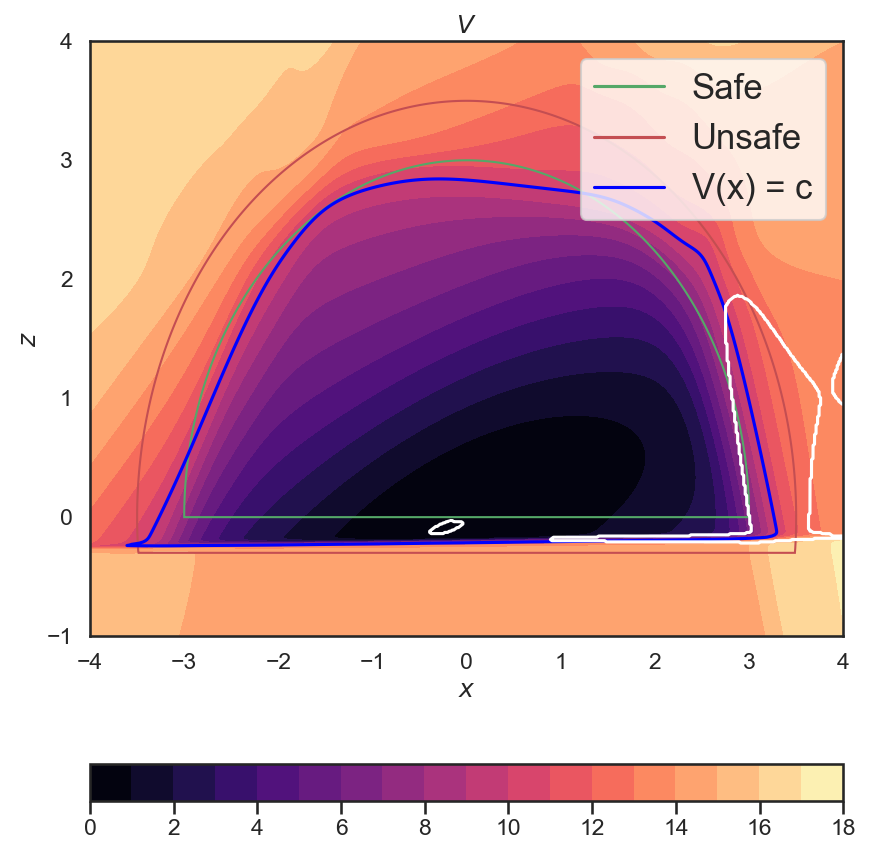}
         \includegraphics[height=2.9cm]{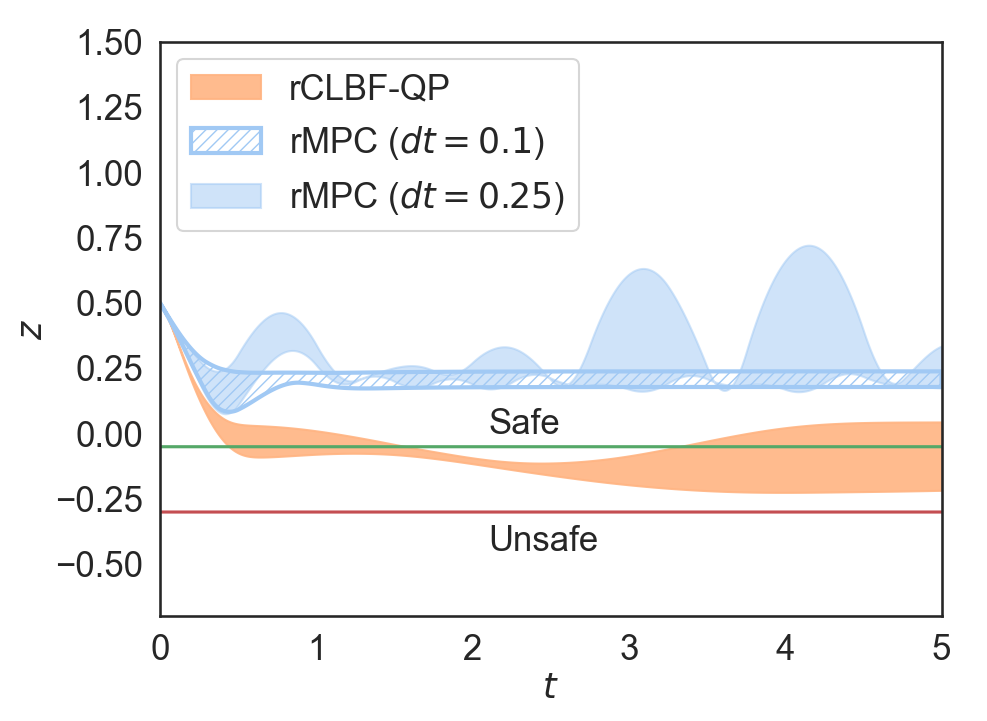}
         \includegraphics[height=2.9cm]{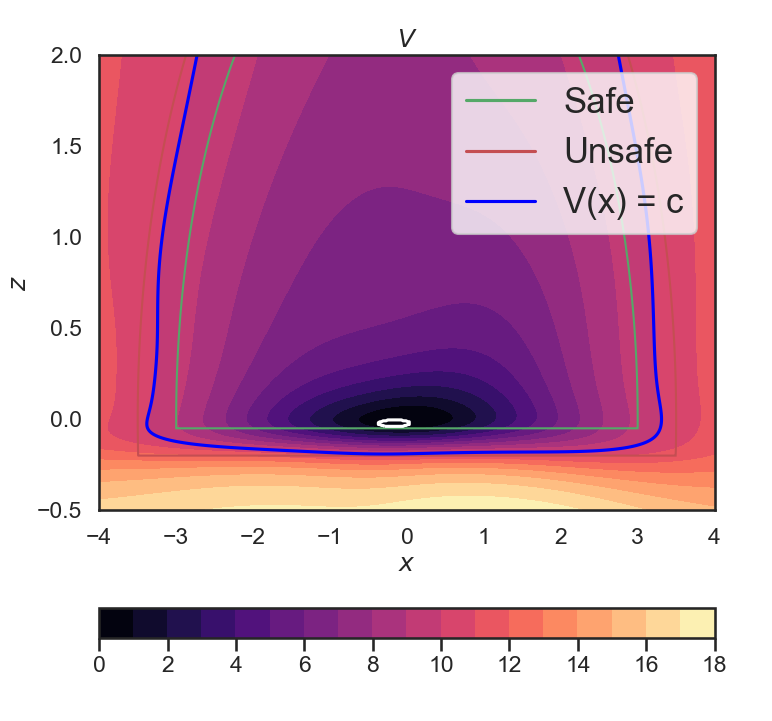}
     \end{subfigure}
    \caption{Controller performance for the 3D quadrotor (left) and neural lander (right), with contour plots of $V$. Blue shows the $c$-level set, white shows regions where condition~\eqref{eq:rclbf_decrease_cond} is violated.}
    \label{fig:uav_stabilization}
\end{figure}
\vspace{-1.5em}
\subsection{Navigation with non-convex safety constraints}

The preceding benchmarks all include convex safety constraints that can be easily encoded in a linear robust MPC scheme. Our next set of examples demonstrate the ability of our approach to generalize to complex environments. These problems are commonly solved by combining planning and robust tracking control, so in our comparisons we use robust MPC to track a safe reference path through each environment. In contrast, our rCLBF-QP controller is not provided with a reference path and instead synthesizes a safe controller using only the model dynamics and (non-convex) safety constraints, which is a more challenging problem than the tracking problem as in Section~\ref{sec:tracking}.  The three navigation problems we consider are: (a) controlling a Segway to duck under an obstacle to reach a goal \cite{aastrom2021feedback}, (b) navigating a 2D quadrotor model around obstacles \cite{Sun2020}, and (c) completing a satellite rendezvous that requires approaching the target satellite from a specific direction \cite{jewison2016spacecraft}. 
For (a) and (c), we conducted additional comparisons with a Hamilton-Jacobi-based controller (HJ,~\cite{toolboxls}) and policy trained via constrained policy optimization reinforcement learning (CPO,~\cite{cpo}).
Simulated trajectories are shown in Fig.~\ref{fig:navigation}. Note that in the Segway and satellite examples, robust MPC fails to track the reference path, while the rCLBF controller successfully navigates the environment.
HJ preserves safety in the satellite example but fails to reach the goal (which is positioned near the border of the unsafe region), while HJ controller synthesis failed in the Segway example (the backwards reachable set did not reach the start location with a \SI{5}{s} horizon). Note that the HJ satellite controller requires different initial conditions, since it will fail if started outside of the safe region. The policy trained using CPO navigates to the goal in the satellite example, but it is not safe. In the Segway example, CPO does not learn a stable controller (details are given in the appendix).
\begin{figure}[bh!]
     \centering
     \hspace{-1cm}
     \begin{minipage}[c][6.5cm][t]{.65\textwidth}
      \vspace*{\fill}
      \begin{subfigure}[b]{\linewidth}
         \centering
         \includegraphics[height=2.5cm]{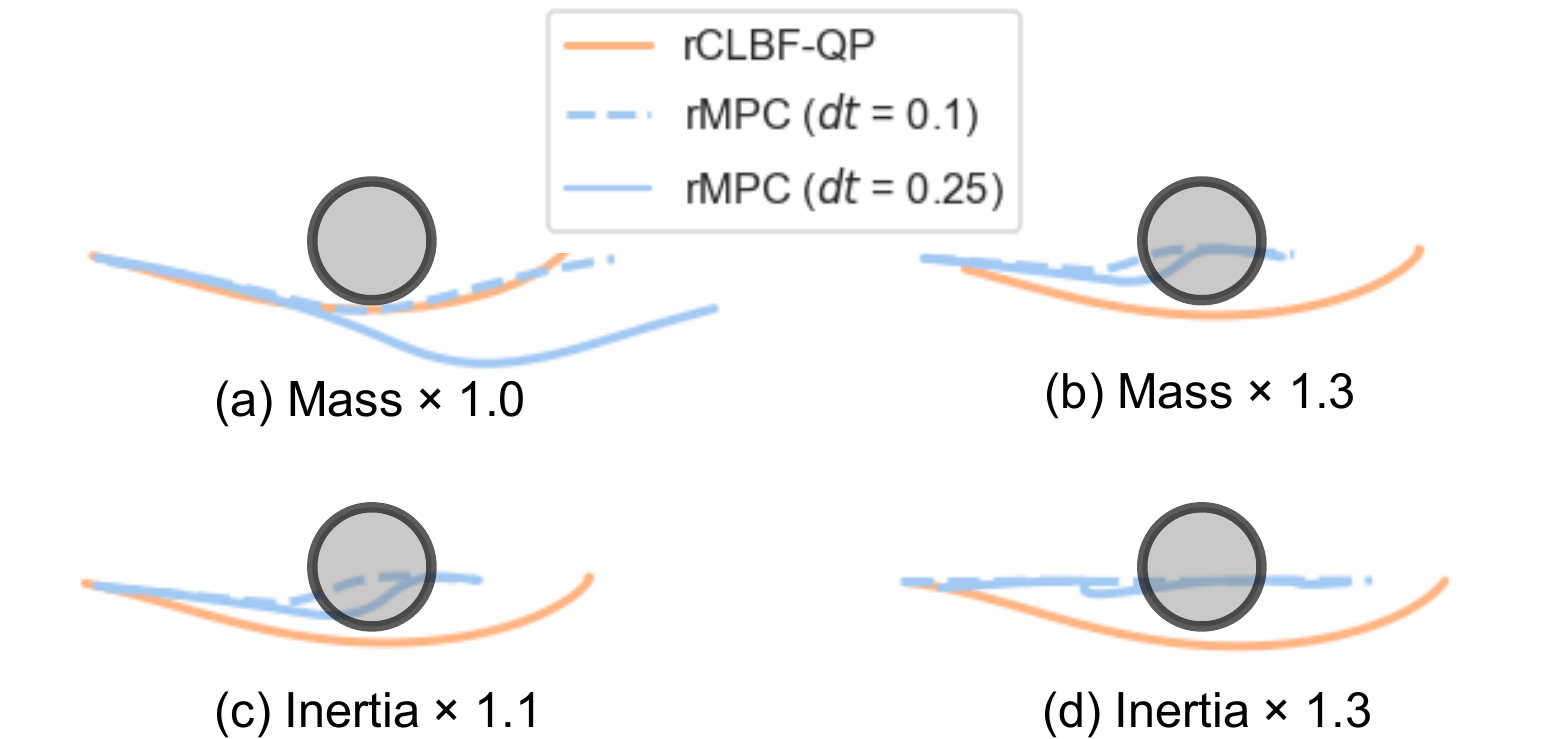}
         \label{fig:segway}
     \end{subfigure}\par\vfill
     \begin{subfigure}[b]{\linewidth}
         \centering
         \includegraphics[height=3.5cm]{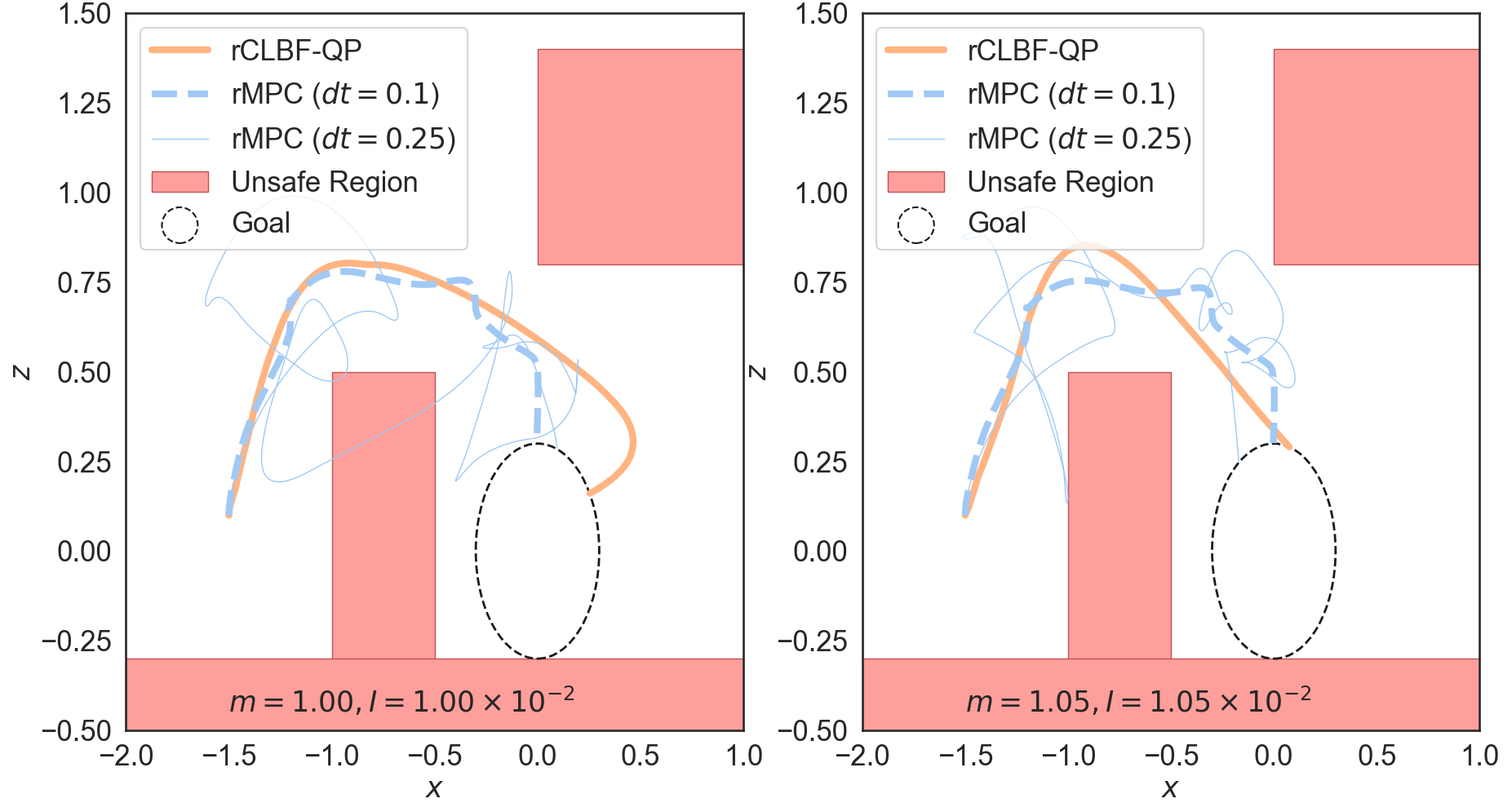}
         \label{fig:quad2d}
     \end{subfigure}
    \end{minipage}%
    \begin{minipage}[c][6.5cm][t]{.4\textwidth}
      \vspace*{\fill}
      \begin{subfigure}[b]{\linewidth}
         \centering
         \includegraphics[height=6cm]{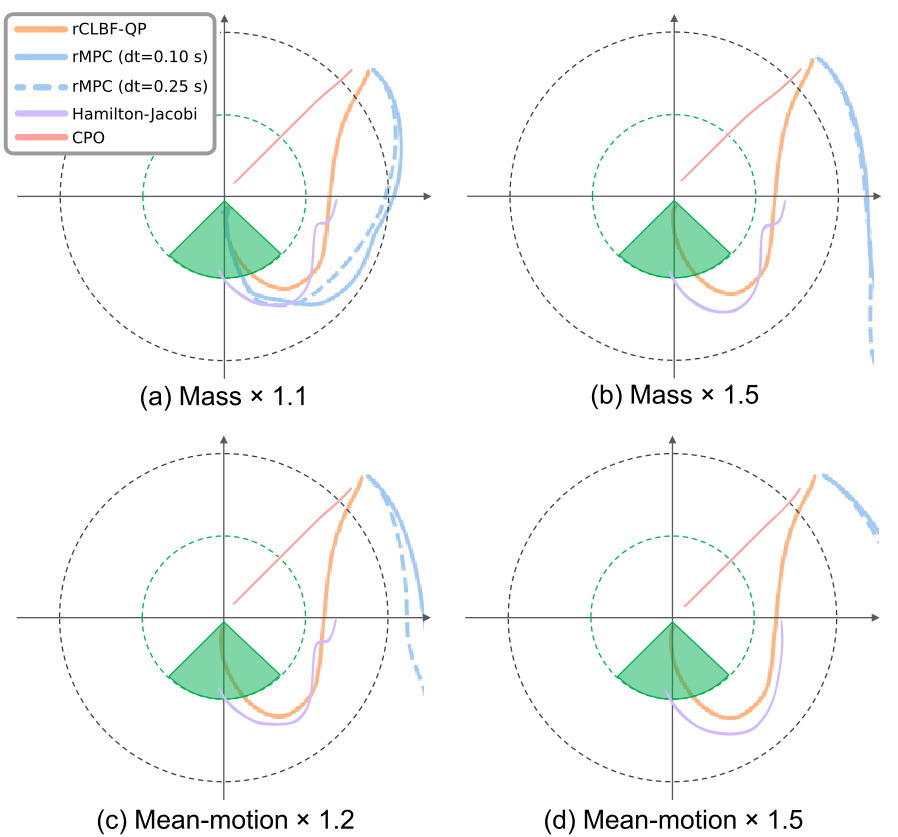}
         \label{fig:satellite}
      \end{subfigure}
    \end{minipage}
    \caption{Navigation problems solved using our rCLBF-QP controller, compared with robust MPC. Clockwise from right: satellite rendezvous, planar quadrotor, and Segway.}
    \label{fig:navigation}
\end{figure}
\vspace{-2em}
\section{Discussion \& Conclusion}
These results demonstrate two clear trends. First, the performance of our controller (in terms of both safety rate and error relative to the goal) is comparable to that of MPC when the MPC controller is stable. In some cases, our method achieves lower steady-state error due to its ability to consider highly nonlinear dynamics, as in the neural lander example. In other cases, the dynamics are well-approximated by the linearization and robust MPC achieves better steady-state error, but our approach still achieves a comparable safety rate. Second, we observe that the performance of the robust MPC algorithm is highly sensitive to the control frequency, and these controllers are only stable at control frequencies that cannot run in real-time on a laptop computer. This highlights one benefit of our method over traditional MPC, which trades increased offline computation for an order of magnitude reduction in evaluation time. In all cases, we find that our proposed algorithm finds a controller that satisfies the safety constraints despite variation in model parameters, validating our claim of presenting a framework for robust safe controller synthesis. 

In summary, we present a novel, learning-based approach to synthesizing robust nonlinear feedback controllers. Our approach is guided by a robust extension to the theory of control Lyapunov barrier functions that explicitly accounts for uncertainty in model parameters. Through experiments in simulation, we successfully demonstrate the performance of our approach on a range of challenging safe control problems. A number of interesting open questions remain, including scalable verification strategies for $V$, the sample complexity of this learning method, and the relative convergence rates of $V$, $\pi_{\mathrm{NN}}$, and the QP controller derived from $V$, which we hope to revisit in future work. We also plan on exploring application to hardware systems, including considerations of delay and state estimation uncertainty.

\section*{Acknowledgments}

The NASA University Leadership Initiative (grant \#80NSSC20M0163) and Defense Science and Technology Agency in Singapore provided funds to assist the authors with their research, but this article solely reflects the opinions and conclusions of its authors and not any NASA entity, DSTA Singapore, or the Singapore Government.
C. Dawson is supported by the NSF Graduate Research Fellowship under Grant No. 1745302.

\bibliography{references}

\begin{thebibliography}{46}
\providecommand{\natexlab}[1]{#1}
\providecommand{\url}[1]{\texttt{#1}}
\expandafter\ifx\csname urlstyle\endcsname\relax
  \providecommand{\doi}[1]{doi: #1}\else
  \providecommand{\doi}{doi: \begingroup \urlstyle{rm}\Url}\fi

\bibitem[Bemporad and Morari(1999)]{rmpc}
A.~Bemporad and M.~Morari.
\newblock Robust model predictive control: A survey.
\newblock In A.~Garulli and A.~Tesi, editors, \emph{Robustness in
  identification and control}, pages 207--226, London, 1999. Springer London.
\newblock ISBN 978-1-84628-538-7.

\bibitem[L{\"{o}}fberg(2003)]{Lofberg2003}
J.~L{\"{o}}fberg.
\newblock Approximations of closed-loop mpc.
\newblock In \emph{Proceedings of the 42nd IEEE Conference on Decision and
  Control}, pages 1438--1442, Maui, Hawaii, 2003.

\bibitem[Levine and Rakovic(2019)]{Levine2019}
W.~S. Levine and S.~V. Rakovic.
\newblock \emph{{Handbook of Model Predictive Control}}.
\newblock Birkh{\"{a}}user, Cham, 2019.
\newblock ISBN 978-3-319-77488-6.
\newblock \doi{10.1007/978-3-319-77489-3}.
\newblock URL \url{http://link.springer.com/10.1007/978-3-319-77489-3}.

\bibitem[Ames et~al.(2017)Ames, Xu, Grizzle, and Tabuada]{Ames2017a}
A.~D. Ames, X.~Xu, J.~W. Grizzle, and P.~Tabuada.
\newblock {Control Barrier Function Based Quadratic Programs for Safety
  Critical Systems}.
\newblock \emph{IEEE Transactions on Automatic Control}, 62\penalty0
  (8):\penalty0 3861--3876, Aug 2017.
\newblock ISSN 00189286.
\newblock \doi{10.1109/TAC.2016.2638961}.

\bibitem[Giesl and Hafstein(2015)]{Giesl2015}
P.~Giesl and S.~Hafstein.
\newblock {Review on computational methods for Lyapunov functions}.
\newblock \emph{Discrete and Continuous Dynamical Systems - Series B},
  20\penalty0 (8):\penalty0 2291--2331, Oct 2015.
\newblock ISSN 15313492.
\newblock \doi{10.3934/dcdsb.2015.20.2291}.
\newblock URL
  \url{https://www.aimsciences.org/article/doi/10.3934/dcdsb.2015.20.2291}.

\bibitem[Cheng et~al.(2019)Cheng, Orosz, Murray, and Burdick]{Cheng2019}
R.~Cheng, G.~Orosz, R.~M. Murray, and J.~W. Burdick.
\newblock {End-to-end safe reinforcement learning through barrier functions for
  safety-critical continuous control tasks}.
\newblock In \emph{33rd AAAI Conference on Artificial Intelligence, AAAI 2019},
  volume~33, pages 3387--3395. AAAI Press, Jul 2019.
\newblock ISBN 9781577358091.
\newblock \doi{10.1609/aaai.v33i01.33013387}.
\newblock URL \url{www.aaai.org}.

\bibitem[Han et~al.(2020)Han, Zhang, Wang, and Pan]{Han2020}
M.~Han, L.~Zhang, J.~Wang, and W.~Pan.
\newblock {Actor-Critic Reinforcement Learning for Control with Stability
  Guarantee}.
\newblock \emph{IEEE Robotics and Automation Letters}, 5\penalty0 (4):\penalty0
  6217--6224, Apr 2020.
\newblock URL \url{http://arxiv.org/abs/2004.14288}.

\bibitem[Chang et~al.(2019)Chang, Roohi, and Gao]{Chang2019}
Y.-C. Chang, N.~Roohi, and S.~Gao.
\newblock {Neural Lyapunov Control}.
\newblock In \emph{Advances in Neural Information Processing Systems},
  volume~32, pages 3245--3254, 2019.

\bibitem[Sun et~al.(2020)Sun, Jha, and Fan]{Sun2020}
D.~Sun, S.~Jha, and C.~Fan.
\newblock {Learning Certified Control using Contraction Metric}.
\newblock In \emph{Conference on Robot Learning}. Conference on Robot Learning,
  Nov 2020.
\newblock URL \url{http://arxiv.org/abs/2011.12569}.

\bibitem[Qin et~al.(2021)Qin, Zhang, Chen, Chen, and Fan]{Qin2021}
Z.~Qin, K.~Zhang, Y.~Chen, J.~Chen, and C.~Fan.
\newblock {Learning Safe Multi-Agent Control with Decentralized Neural Barrier
  Certificates}.
\newblock In \emph{Conference on Learning Representations}. Conference on
  Learning Representations, Jan 2021.
\newblock URL \url{http://arxiv.org/abs/2101.05436}.

\bibitem[Tsukamoto and Chung(2020)]{Tsukamoto2020}
H.~Tsukamoto and S.-J. Chung.
\newblock {Neural Contraction Metrics for Robust Estimation and Control: A
  Convex Optimization Approach}.
\newblock \emph{IEEE Control Systems Letters}, 5\penalty0 (1):\penalty0
  211--216, Jun 2020.
\newblock \doi{10.1109/LCSYS.2020.3001646}.
\newblock URL \url{http://arxiv.org/abs/2006.04361
  http://dx.doi.org/10.1109/LCSYS.2020.3001646}.

\bibitem[Dean et~al.(2020)Dean, Taylor, Cosner, Recht, and Ames]{Dean2020a}
S.~Dean, A.~J. Taylor, R.~K. Cosner, B.~Recht, and A.~D. Ames.
\newblock {Guaranteeing Safety of Learned Perception Modules via
  Measurement-Robust Control Barrier Functions}.
\newblock In \emph{Conference on Robot Learning}. Conference on Robot Learning,
  Oct 2020.
\newblock URL \url{http://arxiv.org/abs/2010.16001}.

\bibitem[Peruffo et~al.(2020)Peruffo, Ahmed, and Abate]{Peruffo2020}
A.~Peruffo, D.~Ahmed, and A.~Abate.
\newblock {Automated and Formal Synthesis of Neural Barrier Certificates for
  Dynamical Models}.
\newblock \emph{arXiv}, Jul 2020.
\newblock URL \url{http://arxiv.org/abs/2007.03251}.

\bibitem[Artstein(1983)]{Artstein1983}
Z.~Artstein.
\newblock {Stabilization with relaxed controls}.
\newblock \emph{Nonlinear Analysis}, 7\penalty0 (11):\penalty0 1163--1173, Jan
  1983.
\newblock ISSN 0362546X.
\newblock \doi{10.1016/0362-546X(83)90049-4}.

\bibitem[Ames et~al.(2014)Ames, Galloway, Sreenath, and Grizzle]{Ames2014}
A.~D. Ames, K.~Galloway, K.~Sreenath, and J.~W. Grizzle.
\newblock {Rapidly exponentially stabilizing control lyapunov functions and
  hybrid zero dynamics}.
\newblock \emph{IEEE Transactions on Automatic Control}, 59\penalty0
  (4):\penalty0 876--891, 2014.
\newblock ISSN 00189286.
\newblock \doi{10.1109/TAC.2014.2299335}.

\bibitem[Ames et~al.(2019)Ames, Coogan, Egerstedt, Notomista, Sreenath, and
  Tabuada]{Ames2019}
A.~D. Ames, S.~Coogan, M.~Egerstedt, G.~Notomista, K.~Sreenath, and P.~Tabuada.
\newblock {Control barrier functions: Theory and applications}.
\newblock In \emph{2019 18th European Control Conference, ECC 2019}, pages
  3420--3431. Institute of Electrical and Electronics Engineers Inc., Jun 2019.
\newblock ISBN 9783907144008.
\newblock \doi{10.23919/ECC.2019.8796030}.

\bibitem[Singletary et~al.(2020)Singletary, Klingebiel, Bourne, Browning,
  Tokumaru, and Ames]{Singletary2020_cbf_apf}
A.~Singletary, K.~Klingebiel, J.~Bourne, A.~Browning, P.~Tokumaru, and A.~Ames.
\newblock {Comparative Analysis of Control Barrier Functions and Artificial
  Potential Fields for Obstacle Avoidance}.
\newblock \emph{arXiv}, oct 2020.
\newblock URL \url{https://arxiv.org/abs/2010.09819v1}.

\bibitem[Choi et~al.(2020)Choi, Casta{\~{n}}eda, Tomlin, and
  Sreenath]{Choi2020}
J.~Choi, F.~Casta{\~{n}}eda, C.~J. Tomlin, and K.~Sreenath.
\newblock {Reinforcement Learning for Safety-Critical Control under Model
  Uncertainty, using Control Lyapunov Functions and Control Barrier Functions}.
\newblock In \emph{Robotics: Science and Systems}. Robotics: Science and
  Systems, Apr 2020.
\newblock URL \url{http://arxiv.org/abs/2004.07584}.

\bibitem[Casta{\~{n}}eda et~al.(2020)Casta{\~{n}}eda, Choi, Zhang, Tomlin, and
  Sreenath]{Castaneda2020}
F.~Casta{\~{n}}eda, J.~J. Choi, B.~Zhang, C.~J. Tomlin, and K.~Sreenath.
\newblock {Gaussian Process-based Min-norm Stabilizing Controller for
  Control-Affine Systems with Uncertain Input Effects}.
\newblock \emph{arXiv}, Nov 2020.
\newblock URL \url{http://arxiv.org/abs/2011.07183}.

\bibitem[Ahmadi and Majumdar(2016)]{Ahmadi2016}
A.~A. Ahmadi and A.~Majumdar.
\newblock {Some applications of polynomial optimization in operations research
  and real-time decision making}.
\newblock \emph{Optimization Letters}, 10\penalty0 (4):\penalty0 709--729, Apr
  2016.
\newblock ISSN 18624480.
\newblock \doi{10.1007/s11590-015-0894-3}.
\newblock URL \url{https://doi.org/10.1007/s11590-015-0894-3}.

\bibitem[Abate et~al.(2020)Abate, Ahmed, Giacobbe, and Peruffo]{Abate2020}
A.~Abate, D.~Ahmed, M.~Giacobbe, and A.~Peruffo.
\newblock {Formal Synthesis of Lyapunov Neural Networks}.
\newblock \emph{IEEE Control Systems Letters}, 5\penalty0 (3):\penalty0
  773--778, Mar 2020.
\newblock URL \url{http://arxiv.org/abs/2003.08910}.

\bibitem[Richards et~al.(2018)Richards, Berkenkamp, and Krause]{Richards2018}
S.~M. Richards, F.~Berkenkamp, and A.~Krause.
\newblock {The lyapunov neural network: Adaptive stability certification for
  safe learning of dynamical systems}.
\newblock In \emph{Conference on Robot Learning}. arXiv, Aug 2018.
\newblock URL \url{http://arxiv.org/abs/1808.00924}.

\bibitem[Taylor et~al.(2019)Taylor, Dorobantu, Le, Yue, and Ames]{Taylor2019}
A.~J. Taylor, V.~D. Dorobantu, H.~M. Le, Y.~Yue, and A.~D. Ames.
\newblock {Episodic Learning with Control Lyapunov Functions for Uncertain
  Robotic Systems}.
\newblock In \emph{IEEE International Conference on Intelligent Robots and
  Systems}, pages 6878--6884. Institute of Electrical and Electronics Engineers
  Inc., Mar 2019.
\newblock \doi{10.1109/IROS40897.2019.8967820}.
\newblock URL \url{http://arxiv.org/abs/1903.01577
  http://dx.doi.org/10.1109/IROS40897.2019.8967820}.

\bibitem[Fan et~al.(2020)Fan, Agha, and Evangelos]{Fan2020}
D.~Fan, A.~Agha, and T.~Evangelos.
\newblock {Deep Learning Tubes for Tube MPC}.
\newblock In \emph{Robotics: Science and Systems}, 2020.
\newblock URL \url{https://roboticsconference.org/2020/program/papers/87.html}.

\bibitem[Kahn et~al.(2016)Kahn, Zhang, Levine, and Abbeel]{Kahn2016}
G.~Kahn, T.~Zhang, S.~Levine, and P.~Abbeel.
\newblock {PLATO: Policy Learning using Adaptive Trajectory Optimization}.
\newblock \emph{Proceedings - IEEE International Conference on Robotics and
  Automation}, pages 3342--3349, mar 2016.
\newblock URL \url{http://arxiv.org/abs/1603.00622}.

\bibitem[Slotine and Li(1991)]{slotine_li_1991}
J.-J.~E. Slotine and W.~Li.
\newblock \emph{Applied nonlinear control: an introduction}.
\newblock Prentice-Hall, 1991.

\bibitem[Romdlony and Jayawardhana(2016)]{Romdlony2016}
M.~Z. Romdlony and B.~Jayawardhana.
\newblock {Stabilization with guaranteed safety using Control Lyapunov-Barrier
  Function}.
\newblock In \emph{Automatica}, volume~66, pages 39--47. Elsevier Ltd, Apr
  2016.
\newblock \doi{10.1016/j.automatica.2015.12.011}.

\bibitem[Xiao et~al.(2021)Xiao, Belta, and Cassandras]{Xiao2021}
W.~Xiao, C.~A. Belta, and C.~G. Cassandras.
\newblock {High Order Control Lyapunov-Barrier Functions for Temporal Logic
  Specifications}.
\newblock In \emph{2019 American Controls Conference, ACC 2019}, Feb 2021.
\newblock URL \url{https://arxiv.org/abs/2102.06787v1}.

\bibitem[Liu et~al.(2021)Liu, Arnon, Lazarus, Strong, Barrett, and
  Kochenderfer]{Liu2021}
C.~Liu, T.~Arnon, C.~Lazarus, C.~Strong, C.~Barrett, and M.~J. Kochenderfer.
\newblock Algorithms for verifying deep neural networks.
\newblock \emph{Foundations and Trends in Optimization}, 4\penalty0
  (3--4):\penalty0 244--404, 2021.
\newblock \doi{10.1561/2400000035}.
\newblock URL \url{https://arxiv.org/abs/1903.06758}.

\bibitem[Bobiti and Lazar(2018)]{Bobiti2018}
R.~Bobiti and M.~Lazar.
\newblock {Automated-Sampling-Based Stability Verification and DOA Estimation
  for Nonlinear Systems}.
\newblock \emph{IEEE Transactions on Automatic Control}, 63\penalty0
  (11):\penalty0 3659--3674, Nov 2018.
\newblock ISSN 15582523.
\newblock \doi{10.1109/TAC.2018.2797196}.

\bibitem[Liu et~al.(2020)Liu, Liberzon, and Zharnitsky]{liu2020almost}
S.~Liu, D.~Liberzon, and V.~Zharnitsky.
\newblock Almost lyapunov functions for nonlinear systems.
\newblock \emph{Automatica}, 113:\penalty0 108758, 2020.

\bibitem[L{\"o}fberg(2012)]{lofberg2012}
J.~L{\"o}fberg.
\newblock Automatic robust convex programming.
\newblock \emph{Optimization methods and software}, 27\penalty0 (1):\penalty0
  115--129, 2012.

\bibitem[Majumdar et~al.(2013)Majumdar, Ahmadi, and Tedrake]{Majumdar2013}
A.~Majumdar, A.~A. Ahmadi, and R.~Tedrake.
\newblock {Control design along trajectories with sums of squares programming}.
\newblock In \emph{Proceedings - IEEE International Conference on Robotics and
  Automation}, pages 4054--4061, 2013.
\newblock ISBN 9781467356411.
\newblock \doi{10.1109/ICRA.2013.6631149}.

\bibitem[Althoff et~al.(2017)Althoff, Koschi, and Manzinger]{Althoff2017a}
M.~Althoff, M.~Koschi, and S.~Manzinger.
\newblock Commonroad: Composable benchmarks for motion planning on roads.
\newblock In \emph{Proc. of the IEEE Intelligent Vehicles Symposium}, 2017.
\newblock ISBN 9781509048045.
\newblock \doi{10.1109/ivs.2017.7995802}.

\bibitem[Liu et~al.(2020)Liu, Shi, Chung, Anandkumar, and Yue]{liu2020robust}
A.~Liu, G.~Shi, S.-J. Chung, A.~Anandkumar, and Y.~Yue.
\newblock Robust regression for safe exploration in control, 2020.

\bibitem[{\AA}str{\"o}m and Murray(2021)]{aastrom2021feedback}
K.~J. {\AA}str{\"o}m and R.~M. Murray.
\newblock \emph{Feedback systems: an introduction for scientists and
  engineers}.
\newblock Princeton university press, 2021.

\bibitem[Jewison and Erwin(2016)]{jewison2016spacecraft}
C.~Jewison and R.~S. Erwin.
\newblock A spacecraft benchmark problem for hybrid control and estimation.
\newblock In \emph{2016 IEEE 55th Conference on Decision and Control (CDC)},
  pages 3300--3305. IEEE, 2016.

\bibitem[Mitchell and Templeton(2005)]{toolboxls}
I.~M. Mitchell and J.~A. Templeton.
\newblock A toolbox of hamilton-jacobi solvers for analysis of nondeterministic
  continuous and hybrid systems.
\newblock In \emph{Proceedings of the 8th International Conference on Hybrid
  Systems: Computation and Control}, HSCC'05, page 480–494, Berlin,
  Heidelberg, 2005. Springer-Verlag.
\newblock ISBN 3540251081.
\newblock \doi{10.1007/978-3-540-31954-2_31}.
\newblock URL \url{https://doi.org/10.1007/978-3-540-31954-2_31}.

\bibitem[Achiam et~al.(2017)Achiam, Held, Tamar, and Abbeel]{cpo}
J.~Achiam, D.~Held, A.~Tamar, and P.~Abbeel.
\newblock Constrained policy optimization.
\newblock In \emph{Proceedings of the 34th International Conference on Machine
  Learning - Volume 70}, ICML'17, page 22–31. JMLR.org, 2017.

\bibitem[Paszke et~al.(2019)Paszke, Gross, Massa, Lerer, Bradbury, Chanan,
  Killeen, Lin, Gimelshein, Antiga, Desmaison, Kopf, Yang, DeVito, Raison,
  Tejani, Chilamkurthy, Steiner, Fang, Bai, and Chintala]{pytorch}
A.~Paszke, S.~Gross, F.~Massa, A.~Lerer, J.~Bradbury, G.~Chanan, T.~Killeen,
  Z.~Lin, N.~Gimelshein, L.~Antiga, A.~Desmaison, A.~Kopf, E.~Yang, Z.~DeVito,
  M.~Raison, A.~Tejani, S.~Chilamkurthy, B.~Steiner, L.~Fang, J.~Bai, and
  S.~Chintala.
\newblock Pytorch: An imperative style, high-performance deep learning library.
\newblock In H.~Wallach, H.~Larochelle, A.~Beygelzimer, F.~d\textquotesingle
  Alch\'{e}-Buc, E.~Fox, and R.~Garnett, editors, \emph{Advances in Neural
  Information Processing Systems 32}, pages 8024--8035. Curran Associates,
  Inc., 2019.
\newblock URL
  \url{http://papers.neurips.cc/paper/9015-pytorch-an-imperative-style-high-performance-deep-learning-library.pdf}.

\bibitem[Falcon(2019)]{falcon2019pytorch}
e.~a. Falcon, WA.
\newblock Pytorch lightning.
\newblock \emph{GitHub. Note:
  https://github.com/PyTorchLightning/pytorch-lightning}, 3, 2019.

\bibitem[Miyato et~al.(2018)Miyato, Kataoka, Koyama, and
  Yoshida]{miyato2018spectral}
T.~Miyato, T.~Kataoka, M.~Koyama, and Y.~Yoshida.
\newblock Spectral normalization for generative adversarial networks.
\newblock In \emph{International Conference on Learning Representations}, 2018.
\newblock URL \url{https://openreview.net/forum?id=B1QRgziT-}.

\bibitem[Boffi et~al.(2020)Boffi, Tu, Matni, Slotine, and Sindhwani]{Boffi2020}
N.~M. Boffi, S.~Tu, N.~Matni, J.~J.~E. Slotine, and V.~Sindhwani.
\newblock {Learning stability certificates from data}.
\newblock In \emph{Conference on Robot Learning}. arXiv, Aug 2020.
\newblock URL \url{http://arxiv.org/abs/2008.05952}.

\bibitem[Gurobi~Optimization(2021)]{gurobi}
L.~Gurobi~Optimization.
\newblock Gurobi optimizer reference manual, 2021.
\newblock URL \url{http://www.gurobi.com}.

\bibitem[Bansal et~al.(2017)Bansal, Chen, Herbert, and Tomlin]{hj_overview}
S.~Bansal, M.~Chen, S.~Herbert, and C.~J. Tomlin.
\newblock Hamilton-jacobi reachability: A brief overview and recent advances.
\newblock In \emph{2017 IEEE 56th Annual Conference on Decision and Control
  (CDC)}, pages 2242--2253, 2017.
\newblock \doi{10.1109/CDC.2017.8263977}.

\bibitem[Clohessy and Wiltshire(1960)]{clohessy1960terminal}
W.~Clohessy and R.~Wiltshire.
\newblock Terminal guidance system for satellite rendezvous.
\newblock \emph{Journal of the Aerospace Sciences}, 27\penalty0 (9):\penalty0
  653--658, 1960.

\end{thebibliography}

\newpage
\appendix

\section*{Supplementary Materials}

In addition to the sections below, we include a video demonstrating our controller's performance on the kinematic car trajectory tracking and 2D quadrotor obstacle avoidance benchmarks. In addition, we include documented code for running several of our examples.

\section*{Proof of Theorem~\ref{clbf_both}}

The proof of Theorem~\ref{clbf_both} follows from the following lemmas, which prove stability and safety of CLBF-based control, respectively.

\begin{lemma}\label{clbf_stability}
    If $V(x)$ is a CLBF, then any control policy $\pi(x) \in K(x) \ \forall\ x \in \cX$ will exponentially stabilize the system $\dot{x} = f(x) + g(x)\pi(x)$ to $\xg$.
\end{lemma}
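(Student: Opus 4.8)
The plan is to use $V$ itself as a Lyapunov function for the closed-loop system and convert the CLBF decrease condition \eqref{eq:clbf_decrease_cond} into an exponential decay estimate via the comparison lemma. First I would fix any admissible policy $\pi(x) \in K(x)$ and consider a closed-loop trajectory $x(t)$ of $\dot{x} = f(x) + g(x)\pi(x)$ with $x(0) \in \cX$. Differentiating $V$ along this trajectory by the chain rule gives
$$\frac{d}{dt} V(x(t)) = \nabla V(x(t)) \cdot \bigl(f(x(t)) + g(x(t))\pi(x(t))\bigr) = L_f V + L_g V\,\pi(x),$$
so the membership $\pi(x) \in K(x)$ immediately yields the pointwise differential inequality $\dot{V}(x(t)) \le -\lambda V(x(t))$ at every $t$ with $x(t) \neq \xg$.

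Next I would integrate this inequality. Applying the comparison lemma (equivalently, Gr\"onwall's inequality) to $\dot{V} \le -\lambda V$ produces $V(x(t)) \le V(x(0)) e^{-\lambda t}$, so $V(x(t)) \to 0$ exponentially at rate $\lambda$. Because $V$ is positive definite with $V(\xg) = 0$ and $V(x) > 0$ on $\cX \setminus \xg$ by conditions \eqref{eq:clbf_goal_cond}--\eqref{eq:clbf_pos_cond}, this decay already forces $x(t) \to \xg$; the remaining work is only to certify the \emph{rate} of convergence of the state.

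The final and main step is to upgrade exponential decay of $V$ to exponential decay of the state error $\norm{x(t) - \xg}$. For this I would invoke quadratic sandwich bounds $k_1\norm{x - \xg}^2 \le V(x) \le k_2\norm{x - \xg}^2$ holding on the compact domain of interest for a smooth, positive-definite $V$ (in particular for the learned parameterization $V = w^T w$). Combining the lower bound with the decay estimate then gives $\norm{x(t) - \xg}^2 \le \tfrac{1}{k_1} V(x(t)) \le \tfrac{1}{k_1} V(x(0)) e^{-\lambda t} \le \tfrac{k_2}{k_1}\norm{x(0) - \xg}^2 e^{-\lambda t}$, which is exactly exponential stability to $\xg$. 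I expect this last step to be the main obstacle: the CLBF definition asserts only positive definiteness, not the quadratic lower bound, so I would need to justify it either from continuity and compactness or from nondegeneracy of the Hessian of $V$ at $\xg$. Absent such a bound, the argument still delivers asymptotic convergence of the state together with guaranteed exponential decay of the certificate $V$ itself.
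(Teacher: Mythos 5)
Your proposal is correct and takes essentially the same route as the paper's proof: membership of $\pi(x)$ in $K(x)$ gives $\dot{V} \le -\lambda V$ along closed-loop trajectories, and $V$ is then used as a Lyapunov certificate for convergence to $\xg$. The ``main obstacle'' you flag is genuine, and it is passed over silently in the paper's own two-line proof, which simply asserts that $V$ ``proves exponential stability'' without the quadratic (or power-law) sandwich bounds on $V$ needed to convert exponential decay of $V(x(t))$ into an exponential rate on $\|x(t)-\xg\|$; your version, which cleanly separates what the CLBF definition alone delivers (exponential decay of $V$ and asymptotic convergence of the state, which is all that Definition~\ref{reach_avoid} actually requires) from what needs the additional hypothesis, is the more careful treatment.
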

\begin{proof}
    Since $\pi(x) \in K(x)$, it follows that $\der{V}{t} \leq -\lambda V(x)$ for the closed loop system. Thus, $V$ is a Lyapunov function and proves exponential stability about $\xg$.
\end{proof}
\begin{lemma}\label{clbf_safety}
    If $V(x)$ is a CLBF, then for any control policy $\pi(x) \in K(x) \ \forall\ x \in \cX$ and any initial condition $x(0) \in \xs$, $x(t) \notin \xu\ \forall\ t > 0$ (i.e. any trajectory starting in the safe set will never enter the unsafe region).
\end{lemma}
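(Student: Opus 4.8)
The plan is to exhibit the $c$-sublevel set $\{x : V(x) \le c\}$ as a forward-invariant region that contains $\xs$ and excludes $\xu$; safety then follows immediately, since a trajectory trapped in this set can never reach $\xu$. This is the standard barrier-certificate argument, and it reuses the monotonicity already established for the stability lemma, so I expect most of the work to be bookkeeping.

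First I would establish that $V$ is non-increasing along the closed-loop flow. Since $\pi(x) \in K(x)$ for every $x$, the defining inequality of $K(x)$ gives $L_f V + L_g V\, \pi(x) + \lambda V(x) \le 0$ pointwise, i.e. $\der{V}{t} \le -\lambda V(x)$ along any solution of $\dot{x} = f(x) + g(x)\pi(x)$. Combining this with the positivity condition \eqref{eq:clbf_pos_cond} together with $V(\xg)=0$, which guarantees $V(x) \ge 0$ everywhere, and with $\lambda > 0$, yields $\der{V}{t} \le 0$. Hence $V(x(t)) \le V(x(0))$ for all $t \ge 0$.

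Next I would close the argument using the level-set conditions. Because $x(0) \in \xs$, the safe-set condition \eqref{eq:clbf_safe_cond} gives $V(x(0)) \le c$, and monotonicity propagates this bound forward to $V(x(t)) \le c$ for all $t > 0$. Suppose for contradiction that $x(t^*) \in \xu$ for some $t^* > 0$; the unsafe-set condition \eqref{eq:clbf_unsafe_cond} would then force $V(x(t^*)) > c$, contradicting the bound just established. Therefore $x(t) \notin \xu$ for all $t > 0$, which is exactly the claimed safety property.

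I expect the argument to be essentially routine once $V$ is differentiable along the flow and solutions are well-defined, both of which follow from the smoothness of $V$ (built as $w^T w$ with smooth activations) and the Lipschitz assumptions in Section~\ref{prelim}. The closest thing to an obstacle is sign and domain bookkeeping: the decrease condition alone only yields $\der{V}{t} \le -\lambda V$, and it is the nonnegativity of $V$ that upgrades this to the monotonicity $\der{V}{t}\le 0$ needed for invariance. One must also note that \eqref{eq:clbf_decrease_cond} is stated on $\cX \setminus \xg$ only, but this causes no difficulty: $\xg$ is the unique global minimum of $V$ with $V(\xg)=0 < c$, so if one considers the first time $V$ would reach the value $c$, it occurs at a non-goal point where the decrease condition applies and forbids any further increase. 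The constant $c>0$ plays no role beyond separating the sub- and super-level sets housing $\xs$ and $\xu$.
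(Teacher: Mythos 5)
Your proposal is correct and follows essentially the same route as the paper's proof: bound $V(x(0)) \leq c$ via condition \eqref{eq:clbf_safe_cond}, propagate this bound forward using the monotonicity of $V$ along closed-loop trajectories implied by \eqref{eq:clbf_decrease_cond}, and derive a contradiction with \eqref{eq:clbf_unsafe_cond} if the trajectory were to enter $\xu$. If anything, your handling of the excluded goal point and of the sign bookkeeping ($V \geq 0$ upgrading $\dot{V} \leq -\lambda V$ to monotonicity) is slightly more careful than the paper's, which simply asserts strict decrease away from $\xg$.
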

\begin{proof}
    For convenience, define $\mathcal{V} = V \circ x(t)$. Since $x(0) \in \xs$, condition~\eqref{eq:clbf_safe_cond} implies that $\mathcal{V}(0) \leq c$. Conditions~\eqref{eq:clbf_pos_cond} and~\eqref{eq:clbf_decrease_cond} ensure that $\mathcal{V}$ is strictly decreasing in time (except when $x(t) = \xg$, at which point $\mathcal{V}$ is constant at zero). As a result, $\mathcal{V}(t) < \mathcal{V}(0) \leq c\ \forall\ t > 0$. If $x(t)$ were to enter the unsafe region, there would exist $t_u > 0$ such that $\mathcal{V}(t_u) > c$. This is a contradiction, so we conclude that $x(t)$ will never enter the unsafe region for $t > 0$.
\end{proof}

\section*{Proof of Theorem~\ref{rclbf}}

\begin{proof}
    By assumption, $f_\theta$ and $g_\theta$ are affine in $\theta$. Additionally, the Lie derivatives $L_f V$ and $L_g V$ are affine in $f$ and $g$, and the rCLBF constraint \eqref{eq:rclbf_qp_cons} is affine in $L_f V$ and $L_g V$. As a result, the overall mapping from $\Theta$ to the left-hand side of \eqref{eq:rclbf_qp_cons} is affine and thus maps the convex hull of $\theta_0,\ldots,\theta_{n_s}$ to the convex hull of $L_{f_{\theta_0}} V + L_{g_{\theta_0}} V u + \lambda V,\ldots,L_{f_{\theta_{n_s}}} V + L_{g_{\theta_{n_s}}} V u + \lambda V$. It follows that if \eqref{eq:rclbf_qp_cons} is satisfied for each scenario $\theta_i$ then it will be satisfied for any possible $\theta \in \Theta$. We can conclude that the rCLBF satisfies the conditions of a standard CLBF for any particular realization of the system with parameters $\theta \in \Theta$, so the safety and stability results of Theorem~\ref{clbf_both} apply.
\end{proof}

\section*{Implementation of Learning Approach}

In this section, we describe several details of our implementation of the system used to train $V$ and $\pi_{\mathrm{NN}}$. At a high level, our system is implemented in PyTorch \cite{pytorch} using PyTorch Lightning \cite{falcon2019pytorch}. All neural networks were implemented with $\tanh$ activation functions, and we used batched stochastic gradient descent with weight decay for optimization (with learning rate $10^{-3}$ and decay rate $10^{-6}$). The next paragraphs describe our training strategies.

\textbf{Sampling of training data:} we found that training performance could be improved by specifying a fixed percentage of training points that must be sampled from the goal, safe, and unsafe regions. For example, instead of sampling $N_{train}$ points uniformly from the state space, we might sample $0.1 N_{train}$ uniformly from the goal region, $0.1 N_{train}$ uniformly from the unsafe region, $0.1 N_{train}$ uniformly from the safe region, and the remaining $0.7 N_{train}$ uniformly from the entire state space.

\textbf{Network initialization: } although it was not necessary for all experiments, we found that some experiments (particularly the car trajectory tracking benchmarks) performed better if the CLBF network was initialized to match the quadratic Lyapunov function found by linearizing the system about the goal point. After training $V$ for several epochs to match this quadratic initial guess, we then alternated between training $V$ and $u_\mathrm{NN}$, optimizing one for several epochs before optimizing the other. We found that on some examples this stabilized the learning process. We did not notice an improvement from episodic learning, although this may be more useful when training on higher-dimensional systems.

\textbf{Hyperparameter tuning: } during the development process, we optimized hyperparameters ($c,\lambda,\epsilon$, the size of the $V$ and $u_\mathrm{NN}$ networks, and the penalty applied to relaxations of the QP constraints) based on a combination of the empirical loss on a test data set and through controller performance in simulation. In most experiments, we found that $c=1$ and $\lambda \in [0.1, 10]$ were sensible defaults, along with neural networks with 2 hidden layers of 64 units each. We found that tuning parameters $a_1=a_2=100$ and $a_3=1$ yield controllers that perform well in simulation.

\textbf{Reach-avoid problem specification: } when defining reach-avoid problems for this approach, care should be taken when specifying $\xs$ and $\xu$. We found that it is necessary to have some region in between the safe and unsafe sets where the neural rCLBF has the freedom to adjust the boundary at $V(x) = c$ as needed to find a valid rCLBF. In addition, we found that including a safety constraint that prevents the system from leaving the region where training data was gathered improves the controller's performance.

\textbf{rCLBF-QP Relaxation: } to ensure that the controller is always feasible, we permit the QP to relax the constraints on the CLBF derivative, and the extent of this relaxation is penalized with a large coefficient in the QP objective. The penalty coefficients used in different experiments are included below. This relaxation also provides a useful training signal for the $V$ network. To make use of this signal, we solve \eqref{eq:rclbf_qp} for each point at training-time and scale the last term of the loss function point-wise by the relaxation, effectively increasing the penalty for regions where the feasible set of \eqref{eq:rclbf_qp} is empty and decreasing the penalty in regions where there exists a feasible solution (even if $\pi_{\rm{NN}}$ has not yet converged to find that feasible solution).

\section*{Verification of Learned CLBFs}

Our focus in this paper is primarily on the use of robust CLBFs to automatically synthesize feedback controllers for nonlinear safe control tasks. We find that our learning method yields functions that satisfy the rCLBF conditions in the vast majority of the state space, and yields feedback controllers that are successful in simulation, but we do not claim to have exhaustively verified our learned rCLBFs. Indeed, scalable verification for learned certificate functions remains an open problem. Relevant verification techniques include neural network reachability analysis (see \cite{Liu2021} for a recent survey), SMT solvers \cite{Chang2019}, Lipschitz-informed sampling methods \cite{Bobiti2018}, and probabilistic claims from learning theory \cite{Qin2021}.

Additionally, these verification techniques might be used in future work to inform the training of an rCLBF neural network. For instance, spectral normalization \cite{miyato2018spectral} of the rCLBF network would allow us to tune the Lipschitz constant of $V(x)$, enabling more effective use of Lipschitz-informed sampling verification tools. Similarly, reachability tools and SMT solvers can provide counter-examples to augment the training data and make further failures less likely \cite{Chang2019}. Further, almost Lyapunov functions~\cite{liu2020almost,Boffi2020} show that even if the Lyapunov conditions do not hold everywhere the system is still provably stable; this result may generalize to CLBFs as well. These are all exciting directions that we hope to explore in our future work on this topic.

\section*{Implementation of Robust MPC}

We implemented our robust MPC scheme in Matlab following the example in the YALMIP documentation \cite{lofberg2012}, which is in turn based on the algorithm published in \cite{lofberg2012}. This MPC algorithm relies on a linearization of the system dynamics, and we used a constant linearization about the goal state. For trajectory tracking examples, we linearize the system about the reference trajectory.

The robust MPC problem was formulated in YALMIP and Gurobi \cite{gurobi} was used as the underlying QP solver. When measuring evaluation times for robust MPC, we first use YALMIP to pre-compile the robust QP then measure the time needed to solve the compiled QP using Matlab's built-in \texttt{timeit} function. We understand that additional optimizations (e.g. explicit MPC) might reduce the evaluation time of robust MPC further, but those optimizations can be applied equally well to speeding up the QP solution in our proposed controller. Effectively, for the purposes of measuring performance, we optimize both approaches to the point where a single quadratic program is being sent to the Gurobi QP solver, and so we believe we have provided a fair comparison in our results.

\section*{Implementation of Hamilton Jacobi Control Synthesis}

To compute the Hamilton-Jacobi value function, we used the helperOC package at \url{https://github.com/HJReachability/helperOC}, which wraps the toolboxLS software \cite{toolboxls}. We over-approximate the parametric uncertainty with an additive uncertainty. In the Segway example, where the unsafe set is defined in terms of $(x, y)$, we over-approximate this unsafe set using a polytope defined on $(p, \theta)$. We computed the HJ value function, then applied the optimal HJ controller forwards described in \cite{hj_overview}. We used a time step of 0.05 seconds and a maximum horizon of 5 seconds while computing the backwards reachable set. The HJ value function was approximated on a grid, and the grid resolution was set to balance accuracy and running time.

\section*{Details on Simulation Experiments}

This section reports the dynamics and hyperparameters used in our experiments. Note that in some of our examples, mass is an uncertain parameter but enters into the dynamics as $1/m$ (similarly for rotational inertia). In these cases we treat $1/m$ as the uncertain parameter and proceed with our method as described in Section~\ref{learning_approach}. For clarity, we give the uncertainty ranges in terms of $m$ rather than in terms of the reciprocal.

Training was conducted on a workstation with a 32-core AMD 3970X CPU and four Nvidia GeForce 2080 Ti GPUs (one GPU was used for each training job, allowing us to parallelize our experiments). Runtime evaluation was conducted on a consumer laptop with an Intel i7-8565U CPU running at 1.8 GHz, and no GPU.

\subsection*{Kinematic Car}

We use the kinematic single track model of a car given in the CommonRoad benchmarks \cite{Althoff2017a}. We modify this model to express position and orientation relative to a reference path parameterized by $v_{ref}$, $a_{ref}$, $\psi_{ref}$, and $\omega_{ref}$ (the linear velocity and acceleration, angle, and angular velocity of the reference path). To model a reference path with uncertain curvature, we treat $\omega_{ref}$ as the uncertain parameter and assume that it vares on $[-1.5, 1.5]$.

The state of the path-centric kinematic car model is $[x_e, y_e, \delta, v_e, \psi_e]$, representing Cartesian error, steering angle, velocity error, and heading error, and the control inputs are $v_\delta$ and $a_{long}$ (the steering angle velocity and longitudinal acceleration). The dynamics are given by $\dot{x} = f(x) + g(x) u$, with
\begin{align}
    f(x) &= \mat{
        v \cos(\psi_e) - v_{ref} + \omega_{ref} * y_e \\
        v \sin(\psi_e) - \omega_{ref} x_e \\
        0 \\
        -a_{ref} \\
        \frac{v}{l_r + l_f} \tan(\delta) - \omega_{ref}
    } \\
    g(x) &= \mat{
        0 & 0 \\
        0 & 0 \\
        1 & 0 \\
        0 & 1 \\
        0 & 0
    }
\end{align}
where we define $v = v_e + v_{ref}$ and $l_f$ and $l_r$ are vehicle parameters measuring the distance from the center of mass to the front and rear axles (these parameters are taken from the CommonRoad \texttt{vehicle-2} benchmark).

We define a goal point $\xg$ as the origin with nominal parameters $v_{ref} = 10.0$, $a_{ref} = 0.0$, $\omega_{ref} = 0.0$ (note that the reference heading and reference position do not enter directly into the dynamics). These tracking tasks are not reach-avoid tasks, as there is no hard constraint other than maintaining bounded tracking error. We used the LQR solution with nominal parameters for $\pi_{nominal}$. Training data were sampled from $x_e, y_e, v_e \in [-3, 3]$, $\psi_e \in [-\pi / 2, \pi/2]$, and $\delta \in [-1.066, 1.066]$, but we selectively re-sampled until at least $40\%$ of the data were within $\norm{x} \leq 1$, at least $20\%$ were within $\norm{x} \leq 0.25$, and at least $20\%$ were $\norm{x} \geq 1.5$, which ensured that adequate training data were sampled from near the goal point. 125,000 samples were used for training, with 10\% reserved for validation. $V$ and $\pi_\mathrm{NN}$ are parameterized as two-layer fully-connected neural networks with hidden layer size of 64 and $\mathrm{tanh}$ activation. We set $c=1$, $\lambda=1$, and allowed relaxations of the constraints in \eqref{eq:rclbf_qp} with penalty coefficient $10$.

A contour plot of the learned $V$ is shown in Fig.~\ref{fig:kscar_V} as a function of $x_e$ and $y_e$, with all other state variables zero. From this plot, we see that some violation of \eqref{eq:rclbf_decrease_cond} occurs near the origin (the violation was computed on a grid with maximum spacing of $0.008$ between points). This makes sense, as this system is likely impossible to robustly stabilize around the origin (i.e. we suspect that there is no fixed $u$ that renders the origin a fixed point for any $\omega_{ref}$). Outside the origin, we see that the CLBF conditions are satisfied, which agrees with what we observe in simulation: our controller leaves the origin but then stabilizes with a relatively constant tracking error.

\begin{figure}[h]
    \centering
    \includegraphics[width=0.8\linewidth]{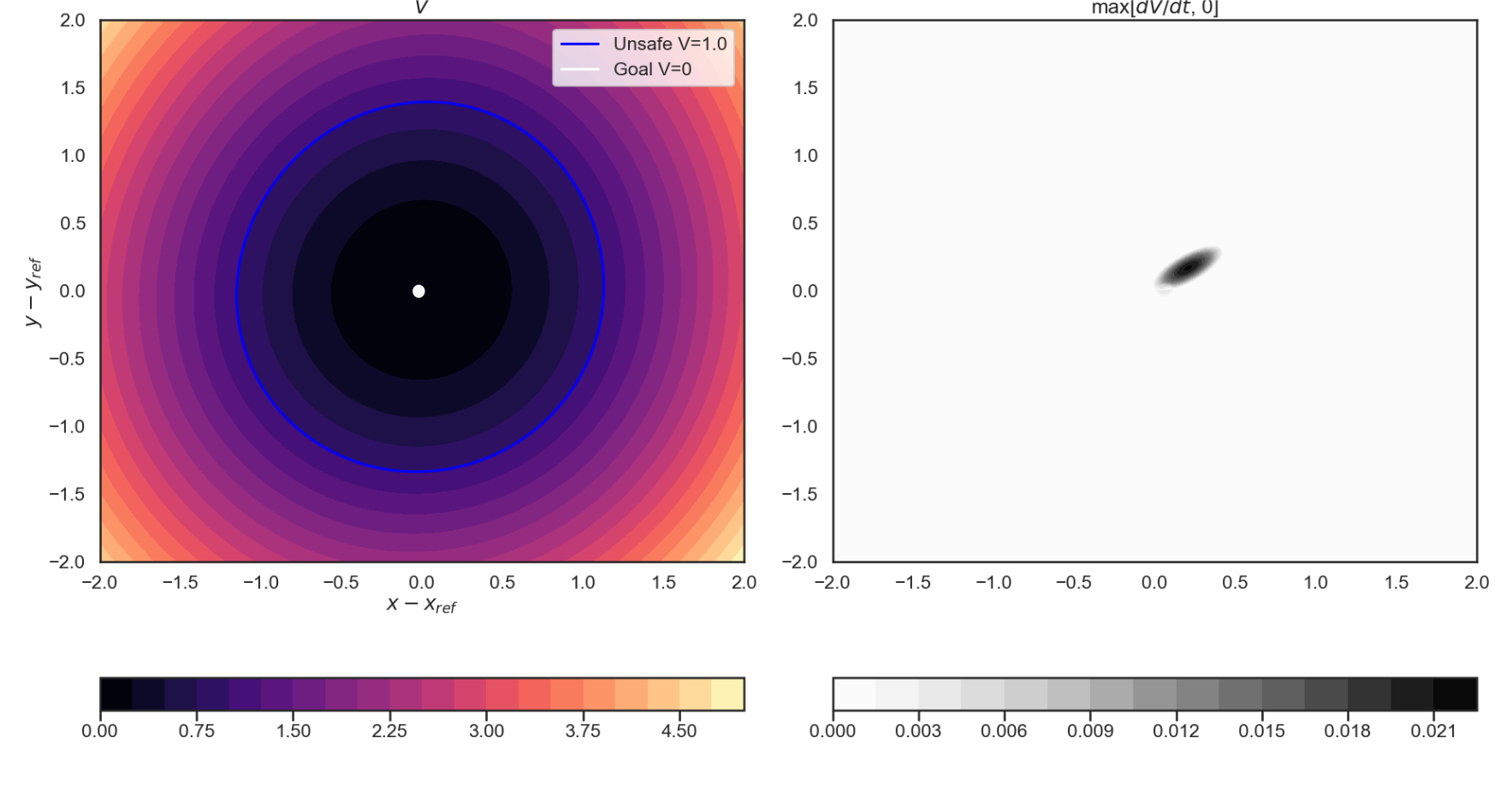}
    \caption{A contour plot of the learned rCLBF $V$ (right) and violation of condition \eqref{eq:rclbf_decrease_cond} (left) for the kinematic car tracking task. The violation of the rCLBF decrease condition \eqref{eq:rclbf_decrease_cond}, which was found to be at most $0.0225$ over this range, was computed as $\max(dV/dt, 0)$, summed over both parameter scenarios.}
    \label{fig:kscar_V}
\end{figure}

\subsection*{Car with Sideslip}

We use the single-track model given in the CommonRoad benchmarks \cite{Althoff2017a}. Similarly to the kinematic model, we express the dynamics in path-centric coordinates and treat $\omega_{ref}$ (which determines the curvature of the reference path) as the uncertain parameter and assume that it vares on $[-1.5, 1.5]$.

Compared to the kinematic model, the single-track (sideslip) model has two additional state variables: $\beta$ (the sideslip angle) and $\dot{\psi}_e$ (the rate of change of the heading error). The control inputs are the same as for the kinematic model. The dynamics are given by $\dot{x} = f(x) + g(x) u$, with
\begin{align}
    f(x) &= \mat{
        v \cos(\psi_e) - v_{ref} + \omega_{ref} * y_e \\ 
        v \sin(\psi_e) - \omega_{ref} x_e \\ 
        0 \\ 
        0 \\ 
        \dot{\psi}_e \\ 
        -\frac{\mu m}{v I_z (l_r + l_f)}(l_f^2 C_{Sf} g l_r + lr^2 C_{Sr} g l_f) (\dot{psi} + \omega_{ref}) + \frac{\mu m}{I_z (l_r + l_f)} (l_r C_{Sr} g l_f - l_f C_{Sf} g l_r) \beta + \frac{\mu m}{I_z (l_r + l_f)} (l_f C_{Sf} g l_r) \delta \\ 
        (\frac{mu}{v^2 (l_r + l_f)} (C_{Sr} g l_f l_r - C_{Sf} g l_r l_f) - 1) (\dot{psi}_e - \omega_{ref}) - \frac{\mu}{v (l_r + l_f)}(C_{Sr} g l_f C_{Sf} g l_r) \beta
            + \frac{\mu}{v (l_r + l_f)} (C_{Sf} g l_r) * \delta \\ 
    } \\
    g(x) &= \mat{
        0 & 0 \\
        0 & 0 \\
        1 & 0 \\
        0 & 1 \\
        0 & 0
    }
\end{align}
where we define $v = v_e + v_{ref}$ and $l_f, l_r, C_{Sf}, C_{Sr}, \mu$ are parameters whose values taken from the CommonRoad \texttt{vehicle-2} benchmark ($g$ is gravitational acceleration). Since these dynamics become singular at low speeds, for $|v| < 0.1$ we revert to the kinematic model (as described in \cite{Althoff2017a}).

We define a goal point $\xg$ as the origin with nominal parameters $v_{ref} = 10.0$, $a_{ref} = 0.0$, $\omega_{ref} = 0.0$ (note that the reference heading and reference position do not enter directly into the dynamics). These tracking tasks are not reach-avoid tasks, as there is no hard constraint other than maintaining bounded tracking error. We used the LQR solution with nominal parameters for $\pi_{nominal}$. Training data were sampled from $x_e, y_e, v_e \in [-3, 3]$, $\psi_e \in [-\pi / 2, \pi/2]$, $\dot{\psi}_e \in [-\pi/2, \pi/2]$, $\delta \in [-1.066, 1.066]$, and $\beta \in [-\pi/3, \pi/3]$, but we selectively re-sampled until at least $40\%$ of the data were within $\norm{x} \leq 0.35$, at least $20\%$ were within $\norm{x} \leq 0.25$, and at least $20\%$ were $\norm{x} \geq 0.85$, which ensured that adequate training data were sampled from near the goal point. 125,000 samples were used for training, with 10\% reserved for validation. $V$ and $\pi_\mathrm{NN}$ are parameterized as two-layer fully-connected neural networks with hidden layer size of 64 and $\mathrm{tanh}$ activation. We set $c=1$, $\lambda=0.1$, and allowed relaxations of the constraints in \eqref{eq:rclbf_qp} with penalty coefficient $10^{8}$.

When we examine the contour plot of the learned $V$ (shown in Fig.~\ref{fig:stcar_V} as a function of $x_e$ and $y_e$, with all other state variables zero), we see that it has a similar shape as that learned for the kinematic model, but we did not detect any violation of \eqref{eq:rclbf_decrease_cond} on a grid with maximum spacing of $0.004$ between points. However, given our controller's simulation performance on this task, which included some error relative to the goal, it is likely that some violation of the CLBF conditions would be seen on other cross sections (i.e. where state variables other than $x_e$ and $y_e$ are varied).

\begin{figure}[h]
    \centering
    \includegraphics[width=0.8\linewidth]{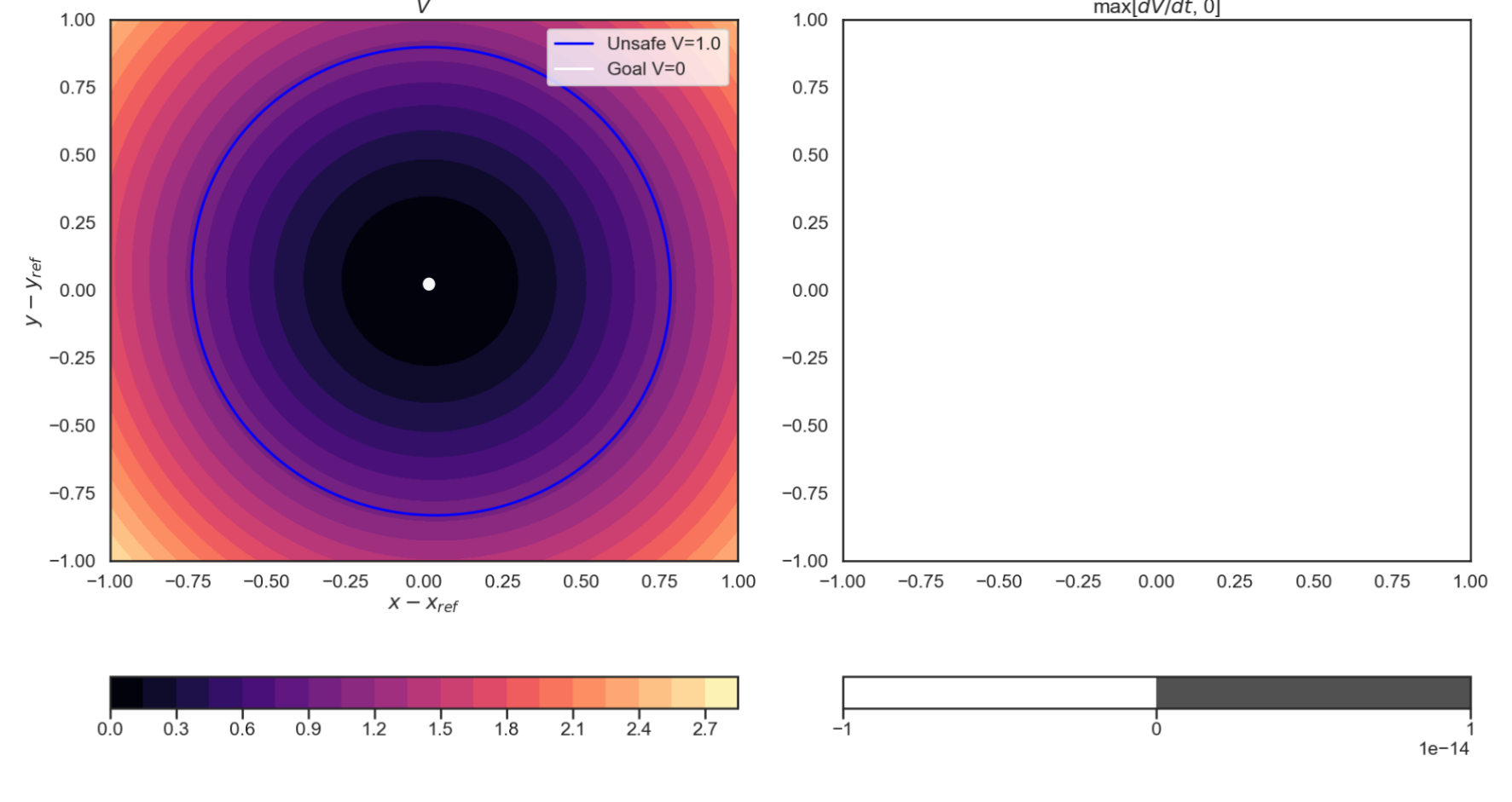}
    \caption{A contour plot of the learned rCLBF $V$ (right) and violation of condition \eqref{eq:rclbf_decrease_cond} (left) for the sideslip car tracking task. The rCLBF decrease condition \eqref{eq:rclbf_decrease_cond}, computed as $\max(dV/dt, 0)$, summed over both parameter scenarios, was not found to be violated on this range.}
    \label{fig:stcar_V}
\end{figure}

\subsection*{3D Quadrotor}

The state of the 9-dimensional quadrotor is given by $x = [p_x, p_y, p_z, v_x, v_y, v_z, \phi, \theta, \psi]$, with control vector $u = [f, \dot{\phi}, \dot{\theta}, \dot{\psi}]$. This model is adapted from \cite{Sun2020}. The system is parameterized by mass $m$. The dynamics are given by $\dot{x} = f(x) + g(x) u$, with

\begin{align}
    f(x) &= \mat{v_x, v_y, v_z, 0, 0, -g, 0, 0, 0}^T \\
    g(x) &= \mat{
        0 & 0 & 0 & 0 \\
        0 & 0 & 0 & 0 \\
        0 & 0 & 0 & 0 \\
        -(1/m)\sin\theta & 0 & 0 & 0 \\
        (1/m)\cos\theta\sin\phi & 0 & 0 & 0 \\
        (1/m)\cos\theta\cos\phi & 0 & 0 & 0 \\
        0 & 1 & 0 & 0 \\
        0 & 0 & 1 & 0 \\
        0 & 0 & 0 & 1
    }
\end{align}
where $g$ is the gravitational acceleration. Note that although these dynamics are not affine in $m$, they are affine in $1/m$, which can be treated as the uncertain parameter without loss in generality.

In this task, $\xg$ is the origin, $\xs = \set{x\ :\ p_z \geq 0 \land \norm{x} \leq 3}$, and $\xu = \set{x\ :\ p_z \leq -0.3 \lor \norm{x} \geq 3.5}$. To model uncertainty in the mass of the quadrotor's payload, we simulate both the rCLBF-QP and MPC controllers with masses sampled uniformly from $m \in [1.0, 1.5]$. To isolate the impact of parameter variation on controller performance, we use a constant initial condition $x(0) = [1, 1, 1, 1, 1, -1, 1, 1, 1]$. We used scenarios $m_0 = 1.0$ and $m_1 = 1.5$ in the rCLBF-QP controller.

For this example, $V$ is parameterized as a two-layer fully-connected neural network with hidden layer size of 48 and $\mathrm{tanh}$ activation. $\pi_\mathrm{NN}$ is represented as a three-layer fully connected network with the same hidden layer size. Training data were sampled from $p_x, p_y, p_z \in [-4, 4]$, $\phi, \theta, \psi \in [-\pi/2, \pi/2]$, and $v_x, v_y, v_z \in [-8, 8]$. We used $\pi_\mathrm{nominal}$ based on an LQR approximation (ignoring state constraints). We set $c=10$, $\lambda=1$, and did not allow relaxations of the constraints in \eqref{eq:rclbf_qp}.

In addition to simulating the performance of our controller, we can also examine the learned rCLBF function itself. A contour plot of $V$ as a function of $p_x$ and $p_z$ (all other states set to zero) is shown in Fig.~\ref{fig:quad9d_V}, comparing the level set at $V(x)=0$ to the safe/unsafe boundaries. In computing this plot, we also computed the maximum violation of the rCLBF condition \eqref{eq:rclbf_decrease_cond}, which we found to be $1.9\times10^{-4}$. This plot was computed by sampling uniformly from $p_x$ and $p_z$, and the maximum distance between adjacent grid points was $0.008$. Based on this extremely small maximum violation (which occurs in a relatively small region of the space), we can conclude that our learning approach yields an rCLBF that is valid throughout most of the domain $\cX$. The violation regions shown in this plot differ from those highlighted in Fig.~\ref{fig:uav_stabilization} because the grid size in Fig.~\ref{fig:quad9d_V} is much smaller, highlighting the sparsity of violations in the state space. An interesting area for future work might involve counter-example guided training of the rCLBF, as in \cite{Chang2019}, to resolve these sparse violations. On the other hand, the theory of almost Lyapunov functions \cite{liu2020almost} suggests that these sparse, low-magnitude violations may not necessarily invalidate the learned CLBF.

\begin{figure}[h]
    \centering
    \includegraphics[width=0.8\linewidth]{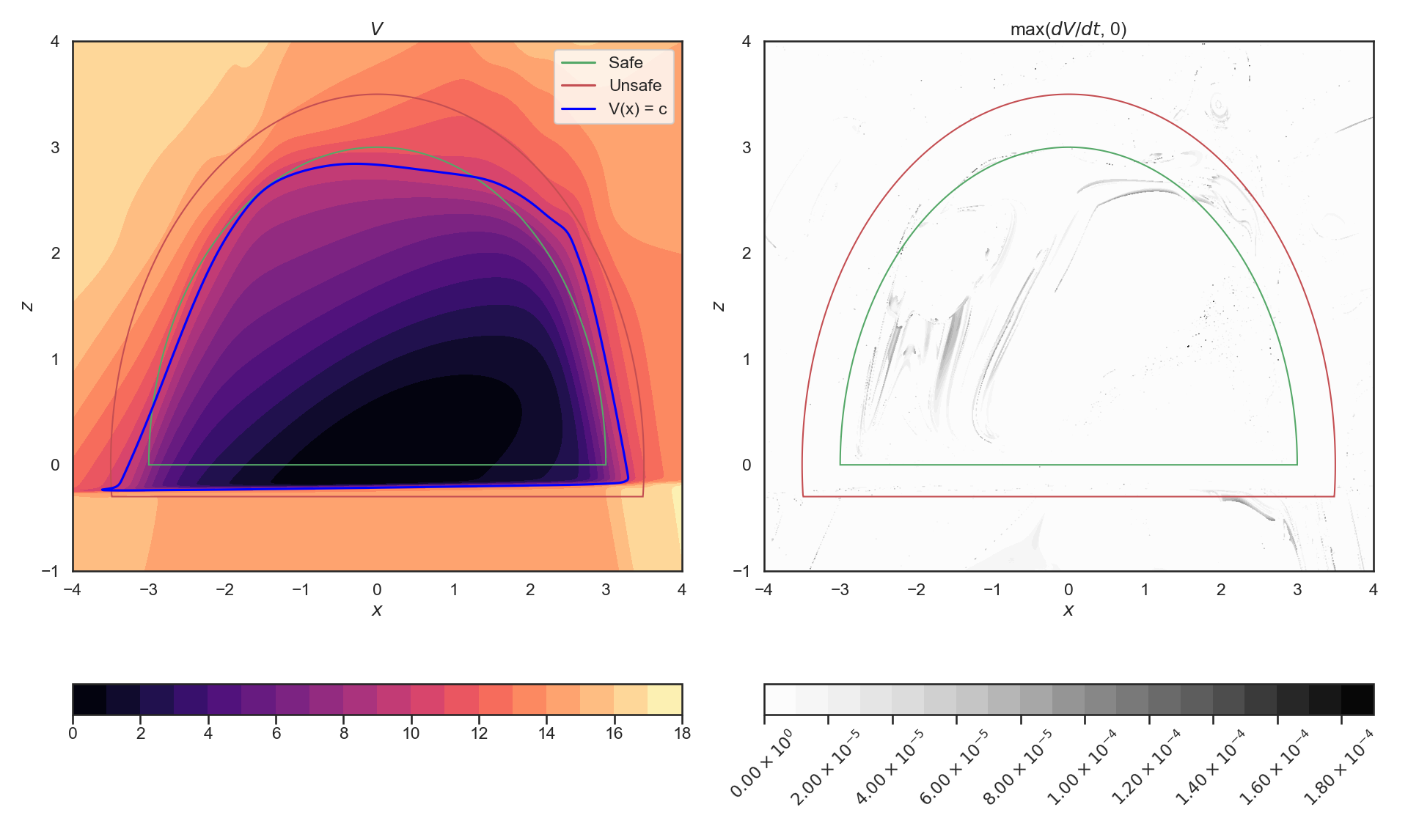}
    \caption{A contour plot of the learned rCLBF $V$ (right) and violation of condition \eqref{eq:rclbf_decrease_cond} (left) for the 3D quadrotor hovering task. The violation of the rCLBF decrease condition \eqref{eq:rclbf_decrease_cond}, which was found to be at most $1.9\times10^{-4}$ over this range, was computed as $\max(dV/dt, 0)$, summed over both parameter scenarios.}
    \label{fig:quad9d_V}
\end{figure}

\subsection*{Neural Lander}

The state of the neural lander is given by $x = [p_x, p_y, p_z, v_x, v_y, v_z]$, with control vector $u = [f_x, f_y, f_z]$. $p_z$ is defined to be positive upwards in this case. This model was developed in \cite{liu2020robust}. The system is parameterized by mass $m$. The dynamics are given by $\dot{x} = f(x) + g(x) u$, with

\begin{align}
    f(x) &= \mat{v_x, v_y, v_z, F_{a1}/m, F_{a2}/m, F_{a3}/m - g}^T \\
    g(x) &= \mat{
        0 & 0 & 0 \\
        0 & 0 & 0 \\
        0 & 0 & 0 \\
        1/m & 0 & 0\\
        0 & 1/m & 0 \\
        0 & 0 & 1/m
    }
\end{align}
where $g$ is the gravitational acceleration and $F_{a}$ is the learned disturbance due to ground effect, represented as a 4-layer neural network. The presence of a learned component in the dynamics means that many traditional control synthesis techniques (including sum-of-squares techniques) do not apply to this system. As with the 9-dimensional quadrotor, these dynamics are not affine in $m$, but they are affine in $1/m$, which can be treated as the uncertain parameter without loss in generality.

We use a similar safe hover task to that used for the 3D quadrotor, with $\xg$ at the origin, $\xs = \set{x\ :\ p_z \geq -0.05 \land \norm{x} \leq 3}$, and $\xu = \set{x\ :\ p_z \leq -0.3 \lor \norm{x} \geq 3.5}$. The mass of the vehicle is sampled uniformly from $m \in [1.47, 2.00]$ and initial conditions $x(0) = [0.5, 0.5, 0.5, 0.5, 0.5, -1.0]$.

For this example, $V$ is parameterized as a two-layer fully-connected neural network with hidden layer size of 48 and $\mathrm{tanh}$ activation. $\pi_\mathrm{NN}$ is represented as a three-layer fully connected network with the same hidden layer size. Training data were sampled from $p_x, p_y \in [-5, 5]$, $z \in [-0.5, 2]$, and $v_x, v_y, v_z \in [-1, 1]$. We used $\pi_\mathrm{nominal}$ based on an LQR approximation (ignoring state constraints and learned dynamics). For completeness, the learned rCLBF for the neural lander with a high-resolution plot of the violation region is included in Fig.~\ref{fig:nl_V}. We set $c=10$, $\lambda=0.1$, and penalized relaxations of the constraints in \eqref{eq:rclbf_qp} with penalty coefficient $7$.

\begin{figure}[h]
    \centering
    \includegraphics[width=1.0\linewidth]{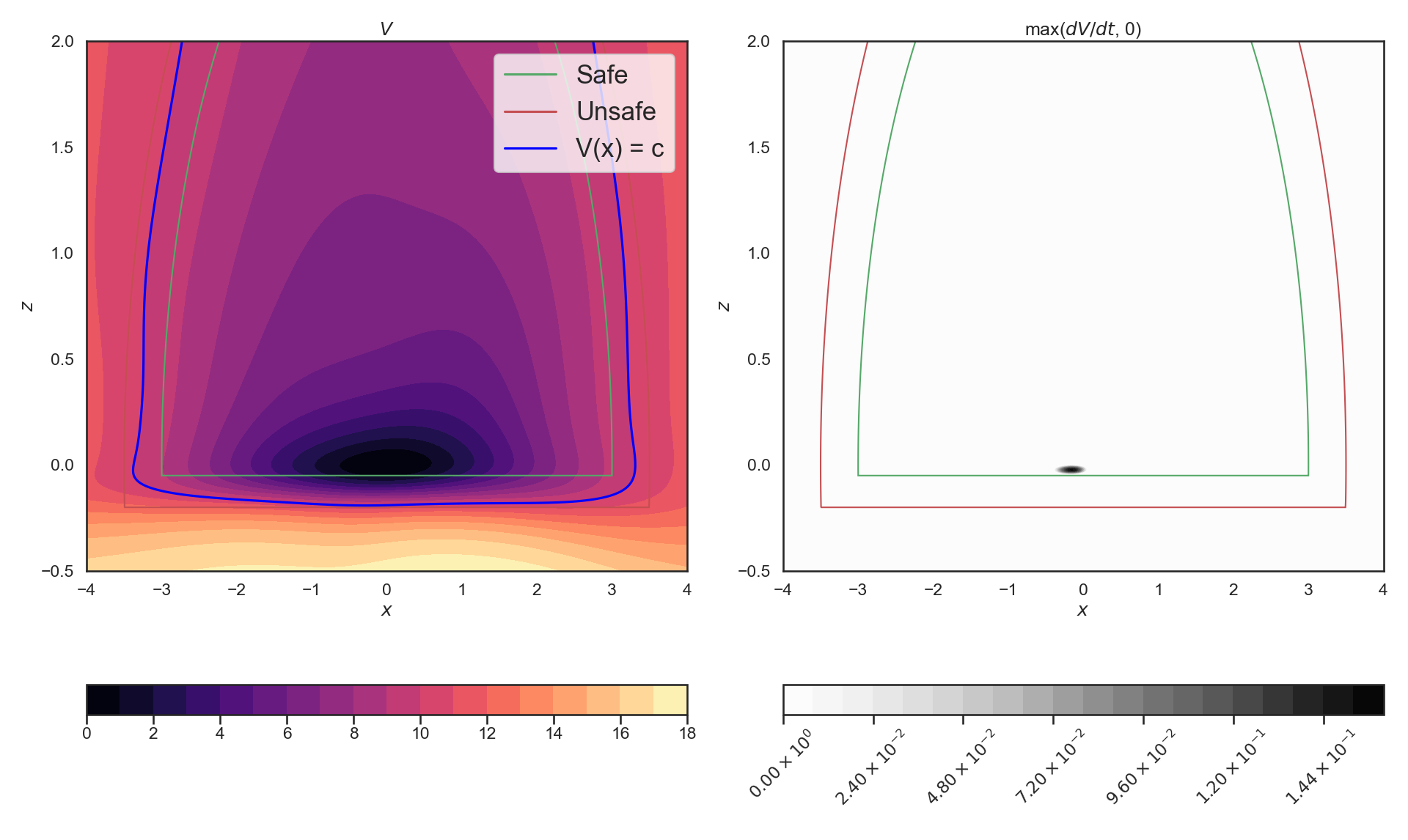}
    \caption{A contour plot of the learned rCLBF $V$ (right) and violation of condition \eqref{eq:rclbf_decrease_cond} (left) for the neural lander hovering task. The violation of the rCLBF decrease condition was computed as $\max(dV/dt, 0)$, summed over both parameter scenarios. We found the violation to be very small and restricted to a small region of the state space. This plot was computed on a grid in $p_x$ and $p_z$ (all other states set to zero). The maximum distance between grid points is $0.008$.}
    \label{fig:nl_V}
\end{figure}

\subsection*{2D Quadrotor}

The state of the 2D quadrotor model is given by $x = [p_x, p_z, \theta, v_x, v_y, \dot{\theta}]$, with control vector $u = [u_1, u_2]$. $p_z$, $u_1$, and $u_2$ are defined to be positive upwards. This model is adapted from \cite{Sun2020}. The system is parameterized by mass $m$, rotational inertia $I$, and the distance of the rotors from the center of mass $r$, and we take $m$ and $I$ to be the uncertain parameters. The dynamics are given by $\dot{x} = f(x) + g(x) u$, with

\begin{align}
    f(x) &= \mat{v_x, v_z, \dot{\theta}, 0, -g, 0}^T \\
    g &= \mat{
        0 & 0 \\
        0 & 0 \\
        0 & 0 \\
        (1/m)\sin\theta & (1/m)\sin\theta \\
        (1/m)\cos\theta & (1/m)\cos\theta \\
        r/I & -r/I
    }
\end{align}
where $g$ is the gravitational acceleration. These dynamics are not affine in $m$ and $I$, but they are affine in $1/m$ and $1/I$, allowing the use of our rCLBF approach.

$\xu$ is set to be the region inside the obstacles, and $\xs$ is offset from the obstacle boundaries by $0.1$ m. To prevent the controller from driving the system out of region covered by the training data, we include a norm constraint in the safe and unsafe sets, $\norm{x} \leq 4.5$ in $\xs$ and $\norm{x} \geq 5$ in $\xu$. To model uncertainty in the mass and inertia of the quadrotor, we vary mass and inertia in $(m, I) \in [1.0, 1.05] \times [0.01, 0.0105]$, with nominal values $m_0 = 1.0$ and $I_0 = 0.01$ (the extreme points of this set are used as scenarios in the rCLBF-QP method).

For this example, $V$ is parameterized as a two-layer fully-connected neural network with hidden layer size of 48 and $\mathrm{tanh}$ activation. $\pi_\mathrm{NN}$ is represented as a three-layer fully connected network with the same hidden layer size. Training data were sampled from $p_x, p_z \in [-4, 4]$, $\theta \in [-\pi, \pi]$, $v_x, v_z \in [-10, 10]$, and $\dot{\theta} = [-2\pi, 2\pi]$. We used $\pi_\mathrm{nominal}$ based on an LQR approximation (ignoring obstacles). We set $c=1$, $\lambda=6$, and penalized relaxations of the constraints in \eqref{eq:rclbf_qp} with penalty coefficient $1100$.

To gain insight into the performance of our learning-based approach to rCLBF synthesis, we can examine the contour plot of the learned rCLBF $V(x)$, shown in Fig.~\ref{fig:2d_quad_obstacles_V}.  These contours were computed on a grid in $p_x$ and $p_z$, with other states set to zero and the maximum distance between adjacent grid points equal to $0.003$. We see that the level set of the learned rCLBF at $V(x) = c$ aligns well with the boundaries of the obstacles, and that the rCLBF derivative condition is satisfied in most of the state space (a small violation $\leq 7.3\times10^{-2}$ is observed in a small region).

\begin{figure}[h]
    \centering
    \includegraphics[width=0.8\linewidth]{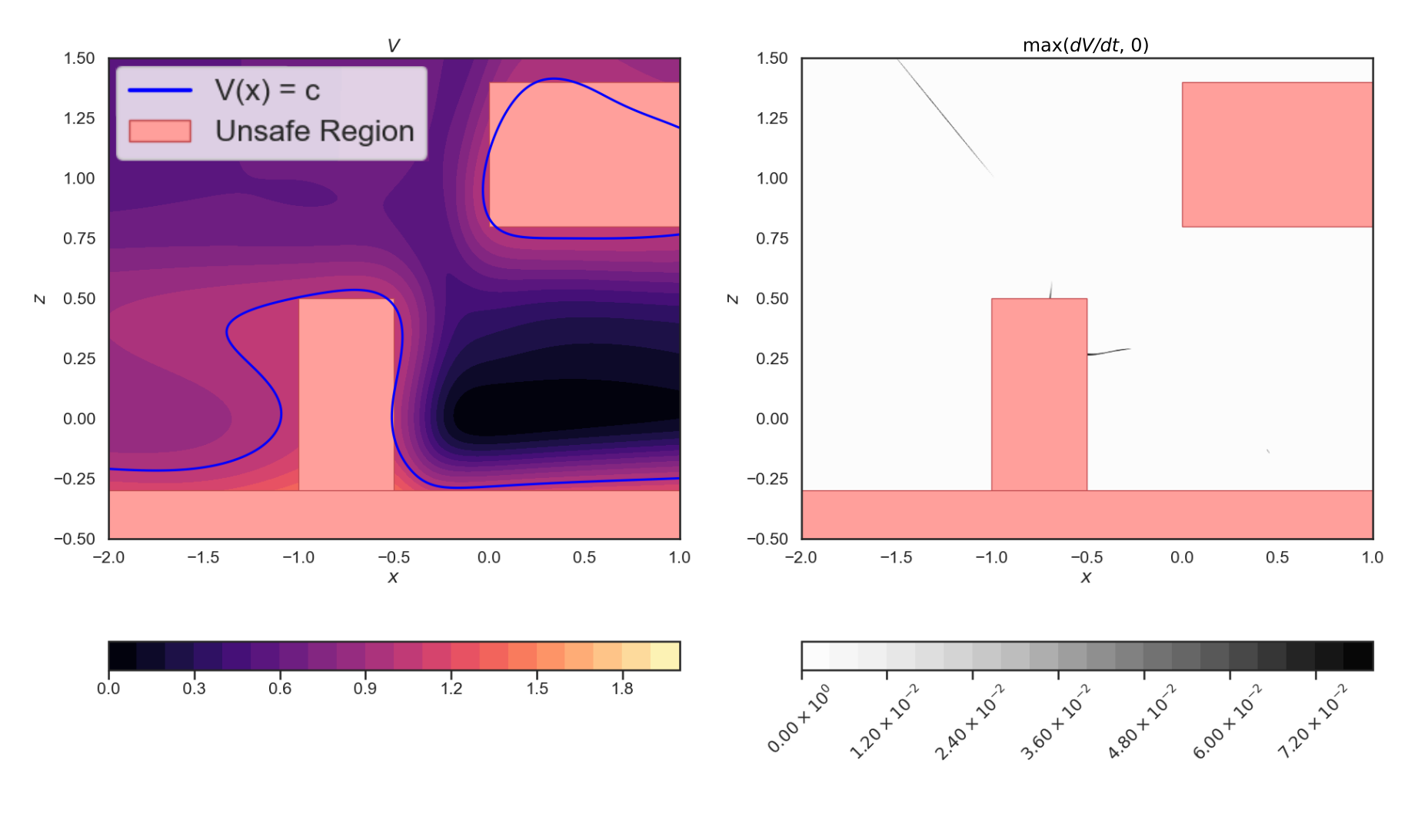}
    \caption{A contour plot of the learned rCLBF $V$ (left) and violation of condition \eqref{eq:rclbf_decrease_cond} (right) for the 2D quadrotor with obstacles. The violation of the rCLBF decrease condition \eqref{eq:rclbf_decrease_cond}, which was found to be at most $7.3\times 10^{-2}$ over this range, was computed as $\max(dV/dt, 0)$, summed over all parameter scenarios.}
    \label{fig:2d_quad_obstacles_V}
\end{figure}

\subsection*{Satellite}

In this example, we consider the satellite rendezvous and docking task adopted from \cite{jewison2016spacecraft}. As is shown in Fig.~\ref{fig:headline}, the blue chaser satellite attempts to close the distance to the black target satellite. Within the green dashed circle, the chaser must stay in the green sector, which represents the line-of-sight (LOS) region where the satellite's sensors are most effective. While both the target and chaser satellites orbit around the Earth, we choose a relative coordinate system centered at the target. The motion of the chaser with respect to the target can be modeled by the Clohessy-Wiltshire-Hill (CWH) equations~\cite{clohessy1960terminal}. The state $x=[p_x, p_y, v_x, v_y]$ is consisted of the relative position and velocity. The control inputs $u=[f_x, f_y]$ are the forces applied to the chaser satellite. To keep the chaser satellite within the LOS region, we define the safe set as $\mathcal{X}_{\mathrm{safe}} = \{x: 4\leq \sqrt{p_x^2 + p_y^2} \leq 8 \lor (p_y \leq -|p_x| \land \sqrt{p_x^2 + p_y^2} \leq 4) \}$, which represents the green sector plus the ring between the grey circle and the green circle. The unsafe set is defined as $\mathcal{X}_{\mathrm{unsafe}} = \{x : \sqrt{p_x^2 + p_y^2} \geq 9 \lor (p_y \geq -|p_x| \land \sqrt{p_x^2 + p_y^2} \leq 3)\}$.

There are two crucial parameters in the model dynamics: the mass of the chaser satellite and the mean-motion $\sqrt{\frac{\mu}{a^3}}$ of the target satellite. $\mu$ is the Earth's gravity constant and $a$ is the length of the semi-major axis of the target's orbit.

The dynamics are given by $\dot{x} = f(x) + g(x)u$, with
\begin{align}
    f(x) &= \mat{v_x, v_y, 2nv_y + 3n^2p_x, -2nv_x}^T \\
    g(x) &= \mat{
        0 & 0 \\
        0 & 0 \\
        1/m & 0 \\
        0 & 1/m
    }
\end{align}

We used a three-layer fully connected neural network with hidden size 256 and $\mathrm{tanh}$ activation to represent the rCLBF $V$. During training, the state samples are uniformly drawn from the state space with range $[-12, 12]$ for each dimension. We set the nominal controller $\pi_{\mathrm{nominal}} = 0$. In the implementation of MPC, we minimize the distance between the goal position and the last step of the planning horizon, subject to the control input constraint $f_x, f_y \in [-20, 20]$. We set constraints to enforce the chaser satellite to enter the green sector rather than anywhere else in the green circle. The planning horizon was 10 steps with timestep 0.02s.

We can examine the learned rCLBF $V$ by plotting its contour in Fig.~\ref{fig:contour_satellite} (left) as a function of $p_x$ and $p_y$ (all other states set to zero). The contour plot shows that the learned $V$ is able to distinguish the safe and unsafe sets. In Fig.~\ref{fig:contour_satellite} (right), we plot the violation of rCLBF decrease condition \eqref{eq:rclbf_decrease_cond}. For most of the samples within the range, the condition is satisfied, and we observe only a slight violation less than $5.5\times 10^{-2}$ for $(p_x, p_y) \in [6, 10]\times[-12, -10]$.

\begin{figure}
    \centering
    \includegraphics[width=\linewidth]{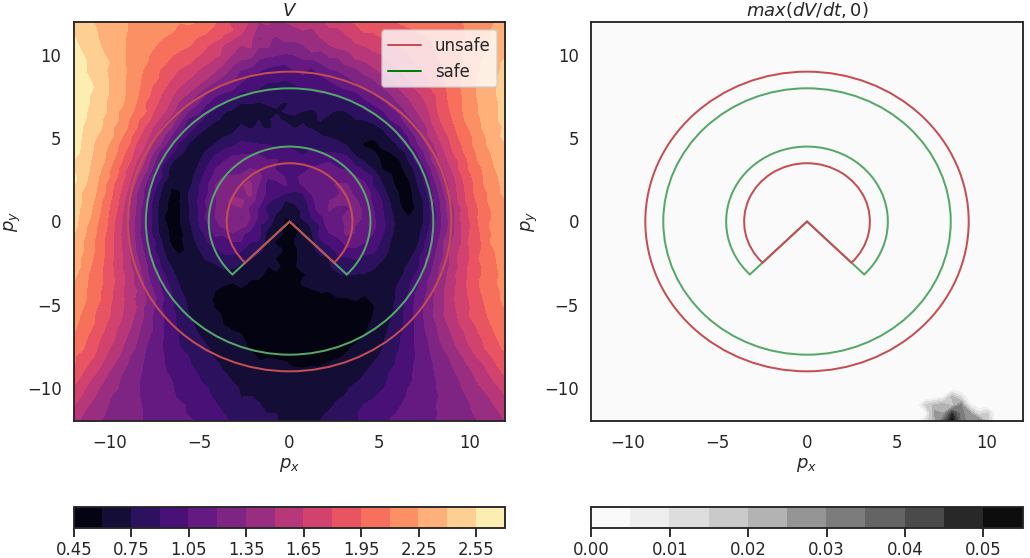}
    \caption{A contour plot of the learned rCLBF $V$ (left) and violation of condition \eqref{eq:rclbf_decrease_cond} (right) for the satellite rendezvous task.}
    \label{fig:contour_satellite}
\end{figure}

\subsection*{Segway}

We consider the Segway obstacle avoidance task illustrated in Fig.~\ref{fig:headline}. The Segway attempts to avoid the obstacle while moving forward, which requires it to tilt forward to avoid collision. The state $x=[p, \theta, v, \omega]$ includes the horizontal position, angle, velocity and angular velocity of the Segway. The control $u$ is the force applied at the base of the system. We assume the vertical position of the wheel's center is always 0 and the length of Segway is 1. The obstacle is a circle with radius 0.1 centered at $(0, 1)$. Denote the position of the Segway's top as $(p_x, p_y) = (p+\sin(\theta), \cos(\theta))$. Then the unsafe set $\mathcal{X}_{\mathrm{unsafe}} = \{x| \sqrt{p_x^2+(p_y-1)^2} \leq 0.1\}$. We define the safe set as $\mathcal{X}_{\mathrm{safe}} = \{x| \sqrt{p_x^2+(p_y-1)^2} \geq 0.15\}$.

The Segway model is from Chapter 3.2 of \cite{aastrom2021feedback}. The state $x = [p, \theta, v, \omega]$ with control input $u$ as the force aligned with $p$ applied at the base of the system. Let $M$ be the mass of the base, $m$ and $J$ be the mass and inertia of the system to be balanced. Denote the distance from the base to the center of mass of the system as $l$. Let $g$ be the gravity constant. Define $M_t = M+m$ as the total mass and $J_t = J+ml^2$ be the total inertia. The system dynamics are given by $\dot{x} = f(x) + g(x)u$, with
\begin{align}
    f(x) &= \mat{
    v\\
    \omega\\
    \frac{gs_{\theta}c_{\theta} + \lambda_1vc_{\theta}+\lambda_2v-l\omega^2s_{\theta}}{c_{\theta} - \frac{M_tJ_t}{m^2l^2} + \lambda_9} \\
    \frac{\lambda_3vc_{\theta}+\lambda_4v-\frac{M_tg}{ml}s_{\theta}-\omega^2s_{\theta}c_\theta}{c_{\theta}^2-\frac{M_tJ_t}{m^2l^2}+ \lambda_9}
    }\\
    g(x) &= \mat{
    0\\
    0\\
    \frac{\frac{\lambda_6}{M_t}(\lambda_5+c_{\theta})}{c_{\theta}^2-\frac{M_tJ_t}{m^2l^2}+ \lambda_9}\\
    \frac{\frac{\lambda_8l}{J_t}(c_{\theta}+\lambda_7)}{c_{\theta}^2-\frac{M_tJ_t}{m^2l^2}+ \lambda_9}
    }
\end{align}
In the implementation of the rCLBF $V$, we used a three-layer fully connected neural network with hidden size 64 and $\mathrm{tanh}$ activation. The lower and upper bound of the state space are $[-3, -\pi/2, -1, -3]^T$ and $[3, \pi/2, 1, 3]^T$. Training samples are uniformly sampled from this range. We used an LQR controller for $\pi_{\mathrm{nominal}}$. In the MPC baseline, we minimize the angle w.r.t. the vertical axis at each step, subject to the top of the Segway being outside the unsafe set. At each step, when the solver fails find a feasible solution satisfying the constraint, we use $u=0$ for that step. The planning horizon is 10 steps, and the timestep is set to 0.005. 

The contour plot of the learned $V$ and the violation of the rCLBF condition are shown in Fig.~\ref{fig:contour_Segway}. We set $v$ and $\omega$ to zero, then sample $p$ and $\theta$ to evaluate $V$. To make this plot more interpretable, we convert $p$ and $\theta$ to the $x-y$ coordinates of the top of the Segway. Fig.~\ref{fig:contour_Segway} shows that the learned $V$ has a wider safe and unsafe set than $\xs$ and $\xu$, which explains why the simulated rCLBF-QP trajectories give such a wide berth to the obstacle. The Segway learns to gradually tilt down when it is 0.5m away from the obstacle, instead of abruptly changing its angle when it is too close to the obstacle. In addition, in Fig.~\ref{fig:contour_Segway} (right) we observe only a minor violation of the rCLBF decrease condition; we have $dV/dt \leq 0$ for most of the plotted range.

\begin{figure}
    \centering
    \includegraphics[width=\linewidth]{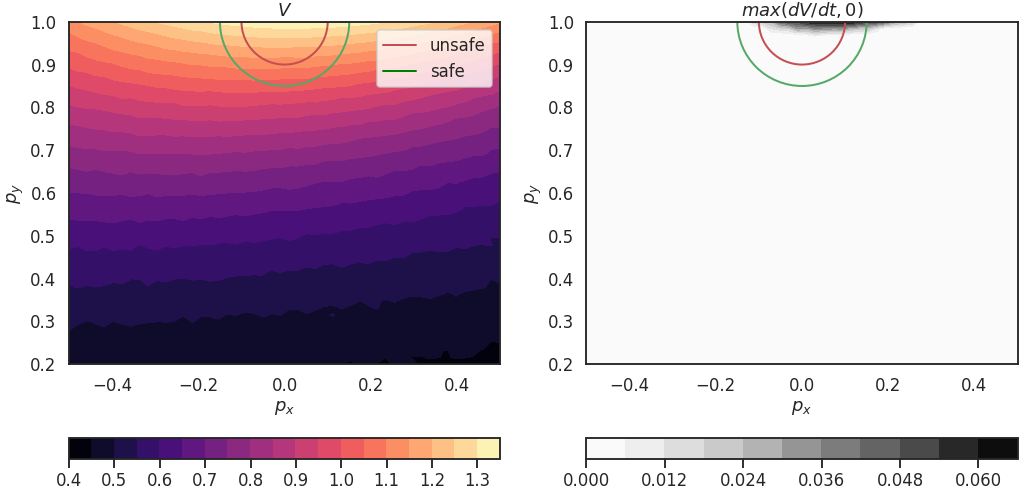}
    \caption{A contour plot of the learned rCLBF $V$ (left) and violation of condition \eqref{eq:rclbf_decrease_cond} (right) for the Segway obstacle avoidance task.}
    \label{fig:contour_Segway}
\end{figure}

\subsubsection*{Failure of Constrained Policy Optimization}

We attempted to train a controller for this task using the constrained policy optimization reinforcement learning algorithm (CPO,~\cite{cpo}). We found that the RL agent was able to stabilize the Segway in the absence of obstacles, but it failed to stabilize the system when an obstacle was included during training. An example plot of the trained controller's performance is shown in Fig.~\ref{fig:cpo_segway}.

\begin{figure}
    \centering
    \includegraphics[width=0.4\linewidth]{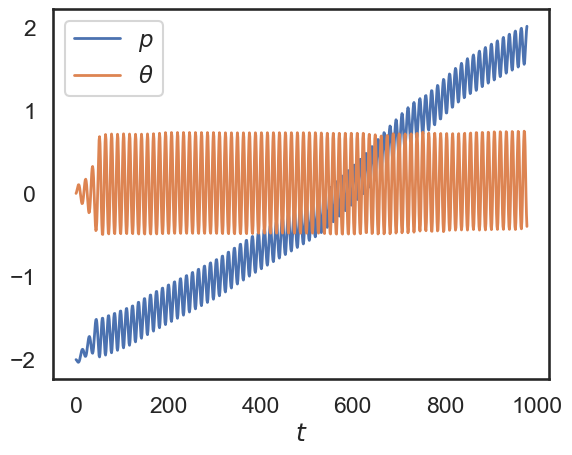}
    \includegraphics[width=0.4\linewidth]{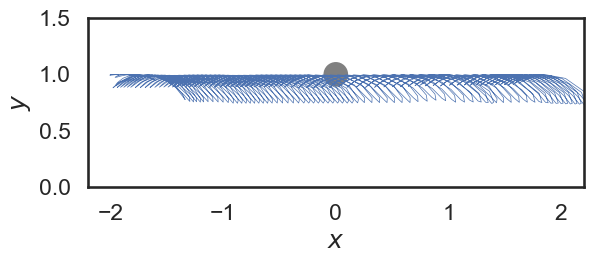}
    \caption{A plot of the CPO RL agent's performance on the Segway obstacle-avoidance task. Left: $p$ and $\theta$ over time; Right: the path of the head of the Segway in the $xy$ plane, showing collision between the Segway and the obstacle.}
    \label{fig:cpo_segway}
\end{figure}

\newpage

\section*{Pandemic Considerations}

Our paper is concerned with synthesizing controllers for robotic systems. Due to facility access limitations from the COVID-19, we were not able to gather experimental results on hardware, so our paper focuses on experiments conducted in simulation. We took a number of steps to ensure that performance in our simulations correlates with expected performance in hardware. In particular,
\begin{enumerate}
    \item We report evaluation times for all controllers used in our experiments and compare these times to the control frequency, allowing us to determine whether the algorithms could feasibly be deployed in real-time.
    \item We randomly vary the values of model parameters while computing safety and error rates, simulating the uncertainty present in models of physical systems.
    \item Our framework can be easily extended to include physical constraints, particularly actuator limits, within the QP-based controller.
    \item One of our simulated examples (the neural lander) includes a learned model of aerodynamic ground effect, which uses experimental data to make the simulation more realistic by including otherwise unmodeled effects.
\end{enumerate}

That said, there are a number of gaps between our simulation and reality. The most glaring gaps are that
\begin{enumerate}
    \item Although our framework supports physical constraints, we do not present a thorough evaluation of the effect of varying actuator limits on controller performance.
    \item We assume full information about the robot state, which in practice means that we assume a high-quality state estimate is available at a suitably high frequency.
    \item We do not study the effects of delay on the stability or safety of our controller (beyond our measurement of control frequency).
\end{enumerate}

In the coming months, we hope to carry out hardware demonstrations that justify these assumptions and close these gaps.

\end{document}